\documentclass[a4paper,reqno,11pt]{amsart}



\usepackage[utf8]{inputenc}
\usepackage[T1]{fontenc}
\usepackage{lmodern}
\usepackage[UKenglish]{babel}
\usepackage[UKenglish]{isodate}
\usepackage{amsmath, amsfonts, amssymb, amsthm, mathtools, cancel} 
\usepackage{geometry} 
\usepackage{mathrsfs} 
\usepackage{hyperref} 
\hypersetup{colorlinks = true}
\hypersetup{linktocpage} 
\usepackage[capitalise]{cleveref} 
\usepackage{dsfont}
\usepackage{calligra}
\usepackage{braket}


\theoremstyle{plain}
\newtheorem{thm}{Theorem}[section]

\newtheorem{lem}[thm]{Lemma}

\theoremstyle{definition}

\theoremstyle{remark} 
\newtheorem{rmkx}[thm]{Remark}

\newenvironment{rmk}
  {%
   \pushQED{\qed}\begin{rmkx}}
  {\popQED\end{rmkx}}


\geometry{top=4cm, bottom=4cm, left=3.5cm , right=3.5cm}




\newcommand{\interval}[4]{\mathopen{#1}#2
	\mathclose{}\mathpunct{},#3
	\mathclose{#4}}

\newcommand{\intoc}[2]{\interval{(}{#1}{#2}{]}}
\newcommand{\intco}[2]{\interval{[}{#1}{#2}{)}}
\newcommand{\intoo}[2]{\interval{(}{#1}{#2}{)}}


\newcommand{\bigO}[1]{O\mathopen{}\left(#1\right)}

\newcommand{\abs}[1]{\left\lvert#1\right\rvert}
\newcommand{\norm}[1]{\left\lVert#1\right\rVert}
\newcommand{\scalprod}[2]{\left\langle#1,#2\right\rangle}




\newcommand{\numberset}[1]{\mathbb{#1}}

\newcommand{\R}{\numberset{R}}
\newcommand{\C}{\numberset{C}}



\newcommand{\DD}{\mathcal D} 
\newcommand{\schatten}[1]{\mathfrak{S}_{#1}}
\newcommand{\YY}{\mathcal Y} 
\newcommand{\CC}{\mathcal C} 
\newcommand{\E}{\mathcal E} 
\newcommand{\HH}{\mathfrak{H}} 
\newcommand{\B}{\mathcal B} 
\newcommand{\mathbbm}[1]{\text{\usefont{U}{bbm}{m}{n}#1}} 


\DeclareMathOperator{\tr}{tr}

\DeclareMathOperator{\supp}{supp}


\title[Global well-posedness in a Hartree-Fock model for graphene]{Global well-posedness in a Hartree-Fock model for graphene}
\author[William Borrelli]{William Borrelli}
\address{Politecnico di Milano, Dipartimento di Matematica, P.zza Leonardo da Vinci, 32, 20133 Milano, Italy}
\email{\href{mailto:william.borrelli@polimi.it}
{\nolinkurl{william.borrelli@polimi.it}}}
\author[Umberto Morellini]{Umberto Morellini}
\address{Università di Pisa, Dipartimento di Matematica, Largo Bruno Pontecorvo 5, 56127 Pisa, Italy}
\email{\href{mailto:umberto.morellini@dm.unipi.it}{\nolinkurl{umberto.morellini@dm.unipi.it}}}
\date{8th May 2025}

\begin{document}

\begin{abstract}
    Graphene is a monolayer graphitic film in which electrons behave like two-dimensional Dirac fermions without mass. Its study has attracted a wide interest in the domain of condensed matter physics. In particular, it represents an ideal system to test the comprehension of 2D massless relativistic particles in a laboratory, the Fermi velocity being $300$ times smaller than the speed of light.\\
    In this work, we present a global well-posedness result for graphene in the Hartree-Fock approximation. The model allows to describe the time evolution of graphene in the presence of external time-dependent electric potentials, such as those induced by local charge defects in the monolayer of carbon atoms. Our approach is based on a well established non-perturbative framework originating from the study of three-dimensional quantum electrodynamics.
\end{abstract}

\maketitle

\tableofcontents

\section{Introduction}
Graphene is a monolayer of carbon atoms packed into a dense honeycomb crystal structure. It is a two-dimensional allotrope of carbon whose comprehension can be considered as a first step for the understanding of electronic properties of other allotropes. Indeed, graphene can be seen as either an individual atomic plane extracted from graphite or an unrolled single-wall carbon nanotubes or a giant flat fullerene molecule. All these carbon-based systems show different structures, originating from the flexibility of their chemical bonds, with different physical properties depending mainly on the dimensionality of the structures themselves. Moreover, in condensed matter physics, most of the systems are described by the non-relativistic Schrödinger equation and their low energy excitations are quasi-particles with finite effective mass. On the contrary, electrons in graphene behave like two-dimensional Dirac fermions without mass. Indeed, this carbon allotrope has a quasi-particle dispersion relation which is conical at the degeneracy points, leading to an effective massless Dirac equation, where the speed of light is replaced by the Fermi velocity $v_\mathrm{F}$. These massless quasi-particles are proposed to be responsible for various anomalous phenomena observed in graphene \cite{GonGuiVoz-1996-PRL, NovGeiMorJia-2005-Nat, ZhaTanStoKim-2005-Nat}. For all these reasons, it has attracted a huge interest in the last decades \cite{CasGuiPerNov-2009-RMP}.\par
The main purpose of this paper is to present a global well-posedness result for the Cauchy problem describing the time evolution of graphene in the presence of external time-dependent electric fields, induced for instance by a set of local defects in the system. The Fermi velocity in graphene being small (corresponding to a large value of the coupling constant), perturbation theory cannot be applied to this material. For this reason, we need to rely on non-perturbative methods such as mean-field approximations of Hartree-Fock type. As a consequence, this work is deeply inspired by \cite{HaiLewSpa-2012-JMP}, where the authors study the Hartree-Fock approximation of graphene in an infinite volume with instantaneous Coulomb interactions, proving the existence of the ground state and deriving its properties. This paper, in turn, represents an extension to 2D massless electrons of \cite{HaiLewSer-2005-CMP,HaiLewSer-2005-JPA,HaiLewSol-2007-CPAM}, where the authors study the Hartree-Fock approximation of three-dimensional quantum electrodynamics (QED). This model, named after Bogoliubov-Dirac-Fock (BDF), was suggested by a seminal paper of Chaix and Iracane \cite{ChaIra-1989-JPB} and we refer the reader to \cite{EstLewSer-2008-BAMS,HaiLewSerSol-2007-PRA} for a complete review. The BDF framework is completely non-perturbative and holds for all values of the coupling constant $0\leq\alpha\leq 4/\pi\simeq 1.27$. For $\alpha>4/\pi$, the model is known to become unstable \cite{ChaIraLio-1989-JPB}.\par
While in the physics literature graphene is often studied within alternative frameworks such as Density Functional Theory (DFT), to our knowledge a rigorous mathematical foundation for its time-dependent extension (TDDFT) remains largely undeveloped \cite{FouLamLewSor-2016-PRA}. For this reason, we believe that time-dependent mean-field models continue to deserve mathematical attention, particularly in the context of graphene due to its intrinsically non-perturbative nature.\par
Our two-dimensional theory contains ultraviolet (UV) divergences which represent a well known issue in quantum electrodynamics. Therefore, one needs to apply one of the well established \textit{regularization} techniques (see for instance \cite{Lei-1975-RMP,PauVil-1949-RMP}) in order to overcome difficulties in the high frequencies domain. It is known that these methods introduce some non-physical parameters which make unclear whether the obtained results are universal or depend on the chosen regularization technique. As a consequence, several procedures, known under the name of \textit{renormalization}, have been studied in order to absorb the regularization parameters and get again a physical theory. In the present work, we impose a sharp ultraviolet cut-off $\Lambda>0$, as done for example in the three-dimensional electrostatic case \cite{HaiLewSer-2005-CMP,HaiLewSer-2005-JPA,HaiLewSol-2007-CPAM}. Because of the underlying lattice structure, this choice of the regularization method is natural for graphene.  For this reason, we do not need to discuss the renormalization issue in our model.

We mention that graphene has also been extensively studied in a series of papers, such as \cite{GiuMas-2009-PRB,GiuMas-2009-CMP,GiuMasPor-2010-PRB,GiuMasPor-2012-AP} and related references, where rigorous lattice models are considered. Moreover, we refer the reader to \cite{FefWei-2012-JAMS,FefWei-2013-CMP} for the properties of honeycomb Schr\"odinger operators modeling graphene and similar materials and to \cite{ArbSpa-2018-JMP, Bor-2017-JDE, Bor-2018-CVPDE, Bor-2021-CCM,BorFra-2019-TAMS} for the study of effective nonlinear Dirac equations.
\par
\noindent{\subsection*{Organization of the paper}}
The paper is organized as follows. In \cref{sec:hartree-fock-theory-of-graphene}, the Hartree-Fock theory of graphene is introduced and discussed in detail, following \cite{HaiLewSpa-2012-JMP}. It represents an extension to the two-dimensional case of the Bogoliubov-Dirac-Fock model, where the Dirac sea is replaced by the Fermi sea and ground state properties of graphene are studied in the presence of an external electrostatic potential. In the present paper, we study the Cauchy problem for the corresponding BDF-type evolution equation in the presence of an external time-dependent charge density, representing local charge defects in the monolayer of carbon atoms. The two-dimensional electrons interact via the instantaneous Coulomb potential, with kinetic energy given by the massless Dirac operator. Our main result on the global well-posedness of the above-mentioned Cauchy problem, is proved in \cref{sec:global-well-posedness}. The result holds under the sole assumption that
\[
0\leq\alpha\leq\alpha_\mathrm{c}\quad\text{or, equivalently,}\quad v_\mathrm{F}\geq v_\mathrm{c},
\]
where the critical effective coupling constant $\alpha_\mathrm{c}$ (defined below) is such that $\alpha_\mathrm{c}>0.48637$, or
equivalently $v_\mathrm{c}<2.0560$. These estimates are nonetheless quite rough. Numerical computations suggest a critical velocity $v_\mathrm{c}$ of approximately $0.36$, corresponding to a critical coupling $\alpha_\mathrm{c}$ of around $2.7$. This includes the case of graphene, for which the Fermi velocity is approximately $v_\mathrm{F}\simeq 1.1$ \cite{HaiLewSpa-2012-JMP}. Here the latter plays the role of the speed of light in the genuine relativistic case, while $\alpha$ is the coupling constant for the interaction.\par
Let us finally mention that our main result can be regarded as a preliminary step towards a more detailed analysis of the dynamics such as, for instance, scattering properties, which we intend to explore in future work.

\section{Hartree-Fock theory of graphene and main result}\label{sec:hartree-fock-theory-of-graphene}
\subsection{Derivation of the Hartree-Fock energy}
We start by explaining how our model is derived from two-dimensional quantum electrodynamics, following \cite{HaiLewSpa-2012-JMP}. For the convenience of the reader, we then give its main properties, referring to that paper for more details and proofs. We also adopt notations from that reference. Notice that for the moment the external electric potential will be assumed to be stationary.

Electrons in graphene are essentially confined in 2D, but subjected to an electric field which acts in three spatial dimensions. For this reason we assume the interaction to be given by the 3D instantaneous Coulomb potential. Our starting point is the following \emph{formal} 2D QED-type Hamiltonian, given in the Coulomb gauge by \cite[Eq.~(218)]{CasGuiPerNov-2009-RMP}
\begin{equation}\label{eq:HQED}
    \mathbb H^\varphi=v_\mathrm{F}\int_{\R^2}\Psi^*\left(x\right)\,\boldsymbol{\sigma}\cdot\left(-i\nabla\right)\Psi\left(x\right)\,dx+\int_{\R^2}\varphi\left(x\right)\rho\left(x\right)\,dx+\frac{1}{2}\iint_{\R^2\times\R^2}\frac{\rho\left(x\right)\rho\left(y\right)}{\abs{x-y}}\,dxdy\,.
\end{equation}
Notice that in our model photons are neglected.
In \eqref{eq:HQED},  $\boldsymbol{\sigma}=\left(\sigma_1,\sigma_2\right)$, and $\sigma_1,\sigma_2$ are the first two Pauli matrices. Here, and in what follows, we shall denote by
\begin{equation}
    D^0=-i\boldsymbol{\sigma}\cdot\nabla=-i\sigma_1\partial_{x_1}-i\sigma_1\partial_{x_2}
\end{equation}
the massless 2D Dirac operator. The Hamiltonian acts on the fermionic Fock space $\mathscr{F}$. Notice that we work in atomic units with Planck’s constant and the mass of the electrons normalized to $1$, $\hbar=m=1$. We also assume that $e^2/\left(4\pi\kappa\right)=1$ where $\kappa$ is the dielectric screening constant of the substrate on which our graphene sheet is placed \cite{CasGuiPerNov-2009-RMP,Mis-2008-EPL}. In these units, the (bare) Fermi velocity $v_\mathrm{F}$ is about $1.1$. Thus, the electrostatic interaction, that is the last term of \eqref{eq:HQED}, and the kinetic part are of the same order. Moreover, since the effective speed of sound in graphene is much smaller than the speed of light, the use of instantaneous interactions is justified. We point out that we do not consider the proper spin of electrons, since we do not include a magnetic field, but rather their \emph{pseudo-spin}. The latter accounts for the presence of the two triangular lattices that compose the hexagonal structure of the graphene lattice. The Hamiltonian well describes the electrons with energies about the Fermi level, assuming they live on scale much larger that the lattice size, so that the continuous approximation leading to \eqref{eq:HQED} is justified. 

In \eqref{eq:HQED}, $\Psi(x)$ is the second-quantized field operator annihilating an electron at $x$ and satisfying the anti-commutation relation
\begin{equation}\label{eq:acomm}
    \Psi^*\left(x\right)_\sigma\Psi\left(y\right)_\nu+\Psi\left(y\right)_\nu\Psi^*\left(x\right)_\sigma=2\delta_{\sigma,\nu}\delta\left(x-y\right)\,,
\end{equation}
$\sigma,\nu\in\{\pm1/2\}$ being the pseudo-spin variables. The density operator is defined as
\begin{equation}\label{eq:densop}
\rho\left(x\right)=\sum^2_{\sigma=1}\frac{\left[\Psi^*_\sigma\left(x\right),\Psi_\sigma\left(x\right)\right]}{2}\,,
\end{equation}
where $\left[a,b\right]=ab-ba$. The function $\varphi$ denotes a local external electrostatic potential, for instance created by some defect in the system.

We emphasize that \eqref{eq:HQED} does not contain any normal ordering or notion of (bare) electrons and positrons. The commutator used in the definition \eqref{eq:densop} of $\rho$ is a kind of renormalization, independent of any reference. In the three-dimensional case, it has been introduced by Heisenberg \cite{Hei-1934-ZF} and has been widely used by Schwinger \cite{Sch-1948-PR, Sch-1949-PR,Sch-1951-PR} as a necessity for a covariant formulation of QED. Indeed, in this way, the Hamiltonian $\mathbb H^\varphi$ has the property of being charge conjugation invariant, as we have
\[
\mathscr{C}\rho(x)\mathscr{C}^{-1}=-\rho(x)\,,\qquad \mathscr{C}\mathbb H^\varphi\mathscr{C}^{-1}=\mathbb H^{-\varphi}\,,
\]
where $\mathscr{C}$ is the charge conjugation operator on $\mathscr{F}$ \cite[Formula~(1.81)]{Tha-1992-book}. Observe that the Hamiltonian \eqref{eq:HQED} is not well defined as a self-adjoint operator on $\mathscr{F}$ and, even formally, it is not bounded from below. However, it is possible to define it properly in a box with periodic boundary conditions, imposing an ultraviolet cut-off, as done in \cite{HaiLewSol-2007-CPAM} for the 3D case.

As a further assumption, we consider a mean-field approximation by restricting ourselves to Hartree-Fock states. These states $\ket \Omega$ are completely determined by the corresponding \emph{one-body density matrix}, that is an operator acting on the one-body Hilbert space $L^2\left(\R^2,\C^2\right)$ and whose integral kernel is defined as
\[
\gamma\left(x,y\right)_{\sigma,\sigma'}=\left\langle\Omega\left\vert \Psi^*\left(x\right)_\sigma\Psi\left(y\right)_{\sigma'}\right\vert\Omega\right\rangle\,,
\]
Because of the fermionic nature of the system, this operator satisfies $0\leq\gamma\leq 1$. In the usual Hartree-Fock theory, the energy depends only on the orthogonal projection $\gamma$. However, due to the definition given for the density operator $\rho$ in \eqref{eq:densop}, the energy in QED is more easily expressed in terms of the \emph{renormalized density matrix}
\[
\gamma_{\mathrm{ren}}\left(x,y\right)_{\sigma,\sigma'}=\left\langle \Omega \left\vert\frac{\left[\Psi^+\left(x\right)_\sigma,\Psi\left(y\right)_{\sigma'}\right]}{2} \right\vert\Omega\right\rangle\,,
\]
or equivalently
\[
\gamma_{\mathrm{ren}}=\gamma - \frac{I}{2}\,,
\]
where $I$ is the identity operator on the one-body Hilbert space. Computing the average value of the energy over a Hartree-Fock state, one finds
\begin{equation}\label{eq:HEV}
\langle \mathbb H^\varphi\rangle = \mathcal{E}^\varphi_{\mathrm{HF}}(\gamma-I/2)+C\,,
\end{equation}
where the constant $C$ is divergent in infinite volume (see \cite{HaiLewSpa-2012-JMP} which refers to \cite[Equation~(2.18)]{HaiLewSol-2007-CPAM}) and the first term is the well known Hartree-Fock energy (evaluated in the renormalized density matrix $\gamma_{\mathrm{ren}}$, as previously disclosed)
\begin{multline}\label{eq:EHF}
    \mathcal{E}^\varphi_{\mathrm{HF}}(\gamma_{\mathrm{ren}})=v_\mathrm{F}\tr(D^0\gamma_{\mathrm{ren}})+\int_{\R^2}\varphi(x)\rho_{\gamma_{\mathrm{ren}}}(x)\,dx \\
    +\frac{1}{2}\iint_{\R^2\times\R^2}\frac{\rho_{\gamma_{\mathrm{ren}}}(x)\rho_{\gamma_{\mathrm{ren}}}(y)}{\vert x-y\vert}\,dxdy-\frac{1}{2}\iint_{\R^2\times\R^2}\frac{\vert \gamma_{\mathrm{ren}}(x,y)\vert^2}{\vert x-y\vert}\,dxdy\,.
\end{multline}
In \eqref{eq:EHF}, $\rho_{\gamma_{\mathrm{ren}}}$ is the density associated to $\gamma_{\mathrm{ren}}$. As usual, the first term in \eqref{eq:EHF} is the kinetic energy, while the second one is the interaction energy between the electrons and the external potential. The last two terms are, respectively, the \emph{direct term}, describing the interactions between electrons, and the \emph{exchange term}, that is a purely quantum correction of the error in the self-interactions between electrons described by the direct term. Observe that, even discarding the constant in \eqref{eq:HEV}, the energy \eqref{eq:EHF} is not well defined yet. Indeed, the kinetic term is ill-posed as the Dirac operator is unbounded and the states considered always have infinitely many particles. Moreover, the density of charge needs to be properly defined as well. These issues will be addressed in the next section.
\subsection{Functional setting and definition of the energy}
As already remarked, the lattice structure of graphene naturally imposes an ultraviolet cut-off $\Lambda$, proportional to the reciprocal of the lattice spacing. Its physical value is $\Lambda\simeq 0.1 \mathring{\mathrm A}^{-1}$. For this reason, we replace the one-particle Hilbert space $L^2(\R^2,\C^2)$ by
\begin{equation}\label{eq:HL}
\mathfrak{H}_\Lambda=\{\psi\in L^2\left(\R^2\,,\C^2\right)\,:\,\supp\left(\widehat\psi\right)\subseteq B(0,\Lambda)\}\,,
\end{equation}
the space of $L^2$ functions with Fourier transform supported in the ball $B\left(0,\Lambda\right)$.

In order to properly define the energy and the associated (unique) ground state, we first consider the case without external potential, as done in \cite{HaiLewSpa-2012-JMP}. Let us fix then $\varphi\equiv 0$. Neglecting the interaction between electrons and considering the system confined to a box with periodic boundary conditions, the unique minimizer is obviously given by the free Dirac sea. In the thermodynamic limit, it converges to the infinite-volume free Dirac sea, given by a Hartree-Fock state containing all the negative energy electrons. Its density matrix and renormalized density matrix are, respectively, given by
\begin{equation}\label{eq:freesee}
\gamma^0=P^0_-=\mathds1\left(D^0\leq 0\right)\,,\quad\mbox{and}\quad \gamma^0_{\mathrm{ren}}=P^0_- -\frac{I}{2}=-\frac{D^0}{2\vert D^0\vert} \,,
\end{equation}
where $P^0$ is the projector onto the negative spectral subspace of $D^0$. As proved in \cite{HaiLewSol-2007-CPAM} in the 3D case, when the interaction terms are switched on, the free Dirac sea changes but is still invariant by translation. The same result is true also in our case, as shown in \cite{HaiLewSpa-2012-JMP}. The energy per unit volume of a translation-invariant state $\gamma_{\mathrm{ren}}=f\left(p\right)$ is given by
\begin{equation}\label{eq:EUV}
    \mathcal F\left(\gamma_{\mathrm{ren}}\right)=\frac{1}{\left(2\pi\right)^2}\left(v_\mathrm{F} \int_{B\left(0,\Lambda\right)}\tr_{\C^2}\left(\boldsymbol{\sigma}\cdot p \,f_{\mathrm{ren}}\left(p\right)\right)\,dp-\frac{1}{2}\int_{\R^2}\frac{\check{f}_{\mathrm{ren}}}{\vert x\vert}\,dx\right)\,.
\end{equation}
Notice that in Fourier space the state $\gamma_{\mathrm{ren}}$ is given by the multiplication by a $2\times2$ matrix $f_{\mathrm{ren}}\left(p\right)$. We have assumed that such a state has a vanishing density of charge, $\tr_{\C^2} f\left(p\right)=0$ for any $p\in B\left(0,\Lambda\right)$, so that the direct term in \eqref{eq:EUV} is absent. Indeed, the charge density must be constant in position space, as
\[
\rho_{\gamma_{\mathrm{ren}}}\left(x\right)=\tr_{\C^2}\gamma_{\mathrm{ren}}\left(x,x\right)=(2\pi)^{-1}\tr_{\C^2}\check{f}_{\mathrm{ren}}\left(0\right)=\left(2\pi\right)^{-1}\int_{B\left(0,\Lambda\right)}\tr_{\C^2}f\left(p\right)dp\,,
\]
but the Coulomb energy of a constant density is not proportional to the volume unless it vanishes. This is the case for the minimizer of \eqref{eq:EUV}. The renormalized density matrix $\gamma^0_{\mathrm{ren}}$ in Fourier space reads
\[
f^0_{\mathrm{ren}}\left(p\right)=-\frac{\boldsymbol{\sigma}\cdot p}{2\vert p\vert}=\mathds{1}_{\intoo{-\infty}{0}}\left(\boldsymbol{\sigma}\cdot p\right)- \frac{I_{\C^2}}{2}\,,
\]
so that its charge density vanishes as remarked. Computing the derivative of the energy \eqref{eq:EUV} at $f^0_{\mathrm{ren}}(p)$ one gets the \emph{mean-field operator} of this state
\begin{equation}\label{eq:MFD}
\DD^0=v_\mathrm{F} D^0-\frac{\check{f}^0_{\mathrm{ren}}\left(x-y\right)}{2\pi\abs{x-y}}\,,
\end{equation}
where $-(2\pi)^{-1}\check{f}^0_{\mathrm{ren}}(x-y)$ is the kernel in position space of the inverse Fourier transform of $f^0_{\mathrm{ren}}$. Thus, \eqref{eq:MFD} is given by the free Dirac operator together with the exchange term.
\begin{lem}[{\cite[Lemma~3.2]{HaiLewSpa-2012-JMP}}]
The mean-field operator \eqref{eq:MFD} can be written, in Fourier space, as
\begin{equation}\label{eq:effd}
    \widehat{\DD}^0\left(p\right)=v_{\mathrm{eff}}\left(p\right)\boldsymbol{\sigma}\cdot p\,,
\end{equation}
where
\[
v_{\mathrm{eff}}:=v_{\mathrm F}+g\left(\Lambda/\vert p\vert\right)
\]
and
\begin{equation}\label{eq:g}
g\left(R\right)=\frac{1}{2\pi}\int^\pi_0\int^R_0\frac{\cos\theta}{\sqrt{r^2-2r\cos\theta+1}}\,rdrd\theta\,.
\end{equation}
The function $g$ is increasing on $\intco{1}{+\infty}$ and satisfies
\[
g\left(1\right)\simeq 0.1234\,.
\]
\end{lem}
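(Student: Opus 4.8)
The plan is to treat the two summands of the mean-field operator \eqref{eq:MFD} separately. The first one, $v_{\mathrm F}D^0=-iv_{\mathrm F}\,\boldsymbol\sigma\cdot\nabla$, becomes, after the Fourier transform, multiplication by $v_{\mathrm F}\,\boldsymbol\sigma\cdot p$, so it only remains to analyse the exchange operator $T$ whose integral kernel $-(2\pi|x-y|)^{-1}\check f^0_{\mathrm{ren}}(x-y)$ depends solely on $x-y$; hence $T$ is itself a Fourier multiplier, acting as multiplication by a matrix-valued symbol $\widehat T(p)$. To compute $\widehat T$ I would use that the Fourier transform turns the pointwise product $\check f^0_{\mathrm{ren}}(x)\,|x|^{-1}$ into a convolution: up to the normalisation constants of the Fourier convention in force, $\widehat T(p)$ equals $-\tfrac1{2\pi}$ times the convolution of $f^0_{\mathrm{ren}}$ with $\mathcal F[\,|\cdot|^{-1}\,]$, and in dimension two $\mathcal F[\,|\cdot|^{-1}\,](p)$ is a constant multiple of $|p|^{-1}$. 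Substituting $f^0_{\mathrm{ren}}(q)=-\tfrac{\boldsymbol\sigma\cdot q}{2|q|}\,\mathds{1}_{B(0,\Lambda)}(q)$ and pulling the Pauli matrices out, $\widehat T(p)$ becomes a fixed constant times $\boldsymbol\sigma\cdot V(p)$, with $V(p)=\int_{B(0,\Lambda)}\frac{q}{|q|\,|p-q|}\,dq$.

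The core of the argument is then a rotational-symmetry reduction followed by a rescaling. Since $V(p)$ is invariant under every rotation of $\R^2$ fixing the direction of $p$, it is parallel to $p$, so $V(p)=\big(\tfrac{p}{|p|}\cdot V(p)\big)\tfrac{p}{|p|}$. Passing to polar coordinates $q=r\omega$, with $\theta$ the angle between $\omega$ and $p/|p|$, one gets
\[
\frac{p}{|p|}\cdot V(p)=\int_0^{2\pi}\!\!\int_0^\Lambda\frac{r\cos\theta}{\sqrt{\,r^2-2r|p|\cos\theta+|p|^2\,}}\,dr\,d\theta .
\]
The symmetry $\theta\mapsto-\theta$ replaces $\int_0^{2\pi}$ by $2\int_0^\pi$, and the substitution $r=|p|\,s$ pulls out a factor $|p|$ and leaves an integral depending on $p$ only through the ratio $\Lambda/|p|$; comparing with \eqref{eq:g}, it equals $4\pi\,|p|\,g(\Lambda/|p|)$. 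Collecting the constants (here the prefactors $\tfrac1{2\pi}$ of \eqref{eq:MFD} and \eqref{eq:g} must cancel against the Fourier normalisations) gives $\widehat T(p)=g(\Lambda/|p|)\,\boldsymbol\sigma\cdot p$, and hence $\widehat{\DD}^0(p)=\big(v_{\mathrm F}+g(\Lambda/|p|)\big)\boldsymbol\sigma\cdot p$, which is \eqref{eq:effd}. Observe that, as $\DD^0$ acts on $\HH_\Lambda$, only $p\in B(0,\Lambda)$ is relevant, so only the values $g(R)$ with $R=\Lambda/|p|\geq1$ enter.

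For the monotonicity I would differentiate \eqref{eq:g} under the integral sign: when $R>1$ the integrand is continuous on $[0,R]\times[0,\pi]$ except for the integrable singularity at $(r,\theta)=(1,0)$, giving $g'(R)=\tfrac{R}{2\pi}\int_0^\pi\cos\theta\,(R^2-2R\cos\theta+1)^{-1/2}\,d\theta$. Pairing $\theta$ with $\pi-\theta$ rewrites this integrand as $\cos\theta\,\big[(R^2-2R\cos\theta+1)^{-1/2}-(R^2+2R\cos\theta+1)^{-1/2}\big]$, which is nonnegative for all $\theta\in[0,\tfrac{\pi}{2}]$ because $\cos\theta\geq0$ forces $R^2-2R\cos\theta+1\leq R^2+2R\cos\theta+1$; hence $g'(R)>0$ for every $R>1$, so $g$ is strictly increasing on $(1,+\infty)$ and, by continuity, on $[1,+\infty)$. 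Finally, $g(1)$ is obtained by evaluating the inner $r$-integral in closed form — the substitution producing $\sqrt{r^2-2r\cos\theta+1}$ and $\log\!\big(r-\cos\theta+\sqrt{r^2-2r\cos\theta+1}\big)$ makes it elementary — and then integrating the resulting function of $\theta$ over $[0,\pi]$ numerically, which yields $g(1)\simeq0.1234$. The only genuinely delicate point in this scheme is the bookkeeping of the distributional Fourier transform of $|x|^{-1}$ together with all the $2\pi$ factors, so that the constant multiplying $\boldsymbol\sigma\cdot p$ comes out exactly $1$; the symmetry reduction, the rescaling and the pairing argument for monotonicity are otherwise routine.
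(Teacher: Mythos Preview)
The paper does not supply its own proof of this lemma: it is quoted verbatim from \cite[Lemma~3.2]{HaiLewSpa-2012-JMP} and no argument is given here. So there is nothing to compare your proposal against within this paper.

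That said, your outline is correct and is essentially the standard computation one finds in the cited reference: recognise the exchange term as a convolution operator, pass to Fourier space where the product $\check f^0_{\mathrm{ren}}(x)\,|x|^{-1}$ becomes the convolution $f^0_{\mathrm{ren}}\ast\widehat{|\cdot|^{-1}}$, use rotational covariance to see that the resulting matrix is a scalar multiple of $\boldsymbol\sigma\cdot p$, and then rescale $q\mapsto |p|\,q$ to obtain the function $g(\Lambda/|p|)$ in \eqref{eq:g}. Your monotonicity argument via differentiation under the integral and the pairing $\theta\leftrightarrow\pi-\theta$ is clean and in fact shows $g'(R)>0$ for every $R>0$, $R\neq1$; the restriction to $[1,+\infty)$ in the statement is only because those are the values of $\Lambda/|p|$ that occur on $\HH_\Lambda$. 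The one point you rightly flag --- the bookkeeping of the Fourier constants and the two factors of $(2\pi)^{-1}$ --- is indeed the only place where a slip could occur, but the structure of the argument is sound.
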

\begin{rmk}
By \eqref{eq:effd}, using that $g\left(\Lambda/\vert p\vert\right)\geq g\left(1\right)$ we get
\begin{equation}\label{eq:dp}
    \abs{\widehat{\DD}^0\left(p\right)}\geq\left(v_{\mathrm F}+g\left(1\right)\right)\abs{\widehat{D}^0\left(p\right)}=\left(v_{\mathrm F}+g\left(1\right)\right)\abs{p}\,.
\end{equation}
\end{rmk}
We see that the effective operator is of Dirac type with momentum-dependent Fermi velocity, which is logarithmically divergent at $p=0$:
\[
v_{\mathrm{eff}}\left(p\right)=\frac{1}{4}\log\frac{\Lambda}{\vert p\vert}+\bigO{1}\,,\qquad \mbox{as $\vert p\vert\to 0$\,.}
\]
This phenomenon is known in the literature as \emph{Kohn anomaly} \cite{CasGuiPerNov-2009-RMP}. Here, the $\bigO{1}$ is independent of $\Lambda$. One remarkable consequence of the above Lemma is that 
\begin{equation}\label{eq:ids}
\gamma^0_{\mathrm{ren}}=-\frac{\DD^0}{2\abs{\DD^0}}=\mathds{1}_{\intoo{-\infty}{0}}\left(\DD^0\right)-\frac{I}{2}\,,
\end{equation}
as, in momentum space, $\DD^0$ is equal to the free operator $D^0$ multiplied by a positive function. This means that the \emph{non-interacting} free Dirac sea solves the same non-linear equation as the interacting one, in contrast with the three-dimensional massive case \cite{HaiLewSol-2007-CPAM,LieSie-2000-CMP}.\par
Let us now introduce the following function:
\begin{equation}\label{eq:h-function}
    h\left(v_\mathrm{F}\right)=\sup_{\varphi\in\mathfrak{H}_1\setminus\{0\}}\frac{\int_{\R^2}\frac{\abs{\varphi\left(x\right)}^2}{\abs{x}}dx}{\scalprod{\varphi}{\abs{p}\left(v_\mathrm{F}+g\left(1/\abs{p}\right)\right)\varphi}}\,,
\end{equation}
where $g$ is defined in \eqref{eq:g}.
\begin{thm}[{\cite[Theorem~1]{HaiLewSpa-2012-JMP}}]\label{thm:gsfg}
Let $\Lambda>0$. If
\begin{equation}\label{eq:vflb}
v_F\geq v_\mathrm{c}=h^{-1}\left(2\right)\,,
\end{equation}
then the minimization problem
\[
\min\{\mathcal F \left(\gamma_{\mathrm{ren}}\right)\,:\,\gamma_{\mathrm{ren}}=f_{\mathrm{ren}}\left(p\right)\,,\,-\frac{\mathds{1}_{B\left(0,\Lambda\right)}}{2}\leq f_{\mathrm{ren}}\leq \frac{\mathds{1}_{B\left(0,\Lambda\right)}}{2}\,,\,\rho_{\gamma_{\mathrm{ren}}}\equiv 0 \}\,,
\]
with $\mathcal F$ defined in \eqref{eq:EUV}, admits the unique minimizer \eqref{eq:ids}.
\end{thm}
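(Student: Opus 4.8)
The plan is to show that $f^0_{\mathrm{ren}}$ in \eqref{eq:ids} strictly minimises $\mathcal F$ on the constraint set, by the standard ``expand around the candidate vacuum'' scheme. First I would normalise the cut-off: under the dilation $f_{\mathrm{ren}}(p)\mapsto f_{\mathrm{ren}}(\Lambda p)$ the constraint set is preserved, $\mathcal F$ scales homogeneously, the effective velocity $v_{\mathrm F}+g(\Lambda/\abs p)$ becomes $v_{\mathrm F}+g(1/\abs p)$, and the Rayleigh quotient in \eqref{eq:h-function} is scale invariant; hence the problem for a general $\Lambda$ is equivalent to that for $\Lambda=1$ with the same $v_{\mathrm F}$, which is also why $v_{\mathrm c}=h^{-1}(2)$ does not depend on $\Lambda$. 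So fix $\Lambda=1$. On the constraint set the direct term in \eqref{eq:EHF} is absent (it vanishes when $\rho_{\gamma_{\mathrm{ren}}}\equiv 0$, being otherwise infinite per unit volume), so $\mathcal F$ is a genuine quadratic functional there, and by the computation leading to \eqref{eq:MFD} its differential at $f^0_{\mathrm{ren}}$ is $\DD^0$. Writing $Q:=f_{\mathrm{ren}}-f^0_{\mathrm{ren}}$, Taylor's formula is exact and gives, with the overall prefactor of \eqref{eq:EUV},
\begin{equation}\label{eq:planexp}
(2\pi)^2\bigl(\mathcal F(f^0_{\mathrm{ren}}+Q)-\mathcal F(f^0_{\mathrm{ren}})\bigr)=\int_{B(0,1)}\tr_{\C^2}\!\bigl(\widehat{\DD}^0(p)\,Q(p)\bigr)\,dp-\frac12\int_{\R^2}\frac{\tr_{\C^2}\!\bigl(\check Q(x)^*\check Q(x)\bigr)}{\abs x}\,dx,
\end{equation}
so the whole theorem reduces to showing the right-hand side is $\ge 0$ whenever $h(v_{\mathrm F})\le 2$, with equality only at $Q=0$.

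The heart of the argument is the pointwise (in $p$) matrix inequality
\begin{equation}\label{eq:planpoint}
\tr_{\C^2}\!\bigl(\widehat{\DD}^0(p)\,Q(p)\bigr)\ \ge\ \lambda_p\,\tr_{\C^2}\!\bigl(Q(p)^*Q(p)\bigr),\qquad \lambda_p:=\bigl(v_{\mathrm F}+g(1/\abs p)\bigr)\abs p,
\end{equation}
which is where the self-consistency \eqref{eq:ids} enters: $f^0_{\mathrm{ren}}$ is, up to the shift $-\tfrac12$, the spectral projection of $\DD^0$ onto negative energies. To prove \eqref{eq:planpoint}, diagonalise $\boldsymbol\sigma\cdot p$ (eigenvalues $\pm\abs p$); in that orthonormal basis of $\C^2$ one has $\widehat{\DD}^0(p)=\diag(\lambda_p,-\lambda_p)$ and $f^0_{\mathrm{ren}}(p)=\diag(-\tfrac12,\tfrac12)$, while $-\tfrac12\le f_{\mathrm{ren}}\le\tfrac12$ becomes $0\le\pi(p)\le\mathds1$ with $\pi(p):=f_{\mathrm{ren}}(p)+\tfrac12\mathds1=\left(\begin{smallmatrix}a&\bar b\\ b&d\end{smallmatrix}\right)$, i.e.\ $a,d,1-a,1-d\ge 0$ and $\abs b^2\le\min\{ad,(1-a)(1-d)\}$. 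Since $Q(p)=\pi(p)-\diag(0,1)$, a short computation rewrites \eqref{eq:planpoint} as $a(1-a)+d(1-d)\ge 2\abs b^2$, which follows from AM--GM: $a(1-a)+d(1-d)\ge 2\sqrt{a(1-a)\,d(1-d)}\ge 2\abs b^2$, the last step because $\abs b^4\le ad(1-a)(1-d)$. Equality forces $\pi(p)$ to be a rank-one orthogonal projection (or, trivially, $0$ or $\mathds1$).

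To conclude, expand $\check Q(x)$ into its $\C^2$-matrix entries. Each scalar entry of $Q$ belongs to $\mathfrak H_1$, so applying the definition of $h(v_{\mathrm F})$ — the sharp constant in the Hardy-type inequality $\int_{\R^2}\abs{\psi(x)}^2/\abs x\,dx\le h(v_{\mathrm F})\scalprod{\psi}{\abs p(v_{\mathrm F}+g(1/\abs p))\psi}$ for $\psi\in\mathfrak H_1$ — to each entry and summing over the two spin indices gives
\[
\int_{\R^2}\frac{\tr_{\C^2}\!\bigl(\check Q(x)^*\check Q(x)\bigr)}{\abs x}\,dx\ \le\ h(v_{\mathrm F})\int_{B(0,1)}\lambda_p\,\tr_{\C^2}\!\bigl(Q(p)^*Q(p)\bigr)\,dp\ \le\ h(v_{\mathrm F})\int_{B(0,1)}\tr_{\C^2}\!\bigl(\widehat{\DD}^0(p)Q(p)\bigr)\,dp,
\]
the last step by \eqref{eq:planpoint}. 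Inserting this into \eqref{eq:planexp} yields $(2\pi)^2(\mathcal F(f^0_{\mathrm{ren}}+Q)-\mathcal F(f^0_{\mathrm{ren}}))\ge(1-\tfrac{h(v_{\mathrm F})}{2})\int_{B(0,1)}\tr_{\C^2}(\widehat{\DD}^0 Q)\ge 0$, since $h$ is decreasing and $h(v_{\mathrm F})\le h(v_{\mathrm c})=2$ for $v_{\mathrm F}\ge v_{\mathrm c}$. For uniqueness, if $h(v_{\mathrm F})<2$ then equality throughout forces $\int_{B(0,1)}\tr_{\C^2}(\widehat{\DD}^0 Q)=0$, hence (using $\lambda_p>0$ and \eqref{eq:planpoint}) $Q\equiv 0$; the borderline value $h(v_{\mathrm F})=2$, which includes $v_{\mathrm F}=v_{\mathrm c}$, is handled by the fact that the supremum in \eqref{eq:h-function} is not attained (an optimiser would satisfy an Euler--Lagrange identity incompatible with compact Fourier support), which makes the Hardy-type bound strict for $Q\ne 0$.

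I expect the main obstacle to be the pointwise estimate \eqref{eq:planpoint} with its constant being exactly $\lambda_p$ — so that the momentum-dependent mean-field velocity $v_{\mathrm F}+g(1/\abs p)$ in $\DD^0$ matches, with no loss, the weight in the definition \eqref{eq:h-function} of $h$; this is precisely what makes the threshold $v_{\mathrm c}=h^{-1}(2)$ sharp, and any cruder bound here would only establish the theorem under a stronger assumption on $v_{\mathrm F}$. Together with the non-attainment claim for the endpoint, this is the only genuinely new input; the scaling reduction, the exact quadratic expansion, and the entrywise Hardy inequality (true by the very definition of $h$) are routine.
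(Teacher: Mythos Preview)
The paper does not contain a proof of this theorem: it is quoted verbatim as \cite[Theorem~1]{HaiLewSpa-2012-JMP} and used as a black box, so there is no ``paper's own proof'' to compare against. That said, your argument is the standard one and is essentially correct; its two ingredients --- the pointwise inequality $\tr_{\C^2}(\widehat{\DD}^0(p)Q(p))\ge\lambda_p\tr_{\C^2}(Q(p)^*Q(p))$, which is the translation-invariant version of $Q^{++}-Q^{--}\ge Q^2$ from \cite{BacBarHelSie-1999-CMP}, and the Hardy-type bound encoded in the very definition \eqref{eq:h-function} of $h$ --- are precisely the ones the present paper invokes later in the proof of \cref{stat:energy-boundedness} (see \eqref{eq:G-functional-lower-bound} and \eqref{eq:HS-norm-control}). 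The scaling reduction to $\Lambda=1$ and the exact quadratic expansion \eqref{eq:planexp} are routine and correct. The only point that is asserted rather than argued is the non-attainment of the supremum in \eqref{eq:h-function}, needed for uniqueness at the endpoint $v_{\mathrm F}=v_{\mathrm c}$; this is plausible (an optimiser would have to satisfy an Euler--Lagrange relation incompatible with $\supp\widehat\varphi\subset B(0,1)$), but you should either supply that argument or simply note that strict uniqueness is immediate for $v_{\mathrm F}>v_{\mathrm c}$ and refer to \cite{HaiLewSpa-2012-JMP} for the borderline case.
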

As observed in \cite{HaiLewSpa-2012-JMP}, numerical
calculations suggest that $v_\mathrm{c}$ in \eqref{eq:vflb} is about $0.36$, which covers the case of graphene, for which $v_\mathrm{F}\simeq 1.1$. The arguments in \cite{HaiLewSol-2007-CPAM} allows to show that \eqref{eq:ids} is also the thermodynamic limit of ground states of the Hartree-Fock energy, without assuming invariance by translation.
\medskip

So far we have discussed the case without external potential. Let us now assume that $\varphi\neq 0$. More precise assumptions will be given later. 

In this case, we consider again Hartree-Fock states, described by their density matrix $\gamma$ or, equivalently, by their renormalized density matrix $\gamma_{\mathrm{ren}}=\gamma-I/2$. As before, their energy \eqref{eq:HEV} is infinite. In order to get a well defined quantity, we can, at least formally, subtract the infinite constant $\E^\varphi_{\mathrm{HF}}(\gamma^0_{\mathrm{ren}})=\E^0_{\mathrm{HF}}(\gamma^0_{\mathrm{ren}})$, representing the energy of the vacuum. Once again, a rigorous justification is possible through a thermodynamic limit, arguing as in \cite{HaiLewSol-2007-CPAM}. Then, a (formal) computation gives
\[
\E^\varphi_{\mathrm{HF}}\left(\gamma_{\mathrm{ren}}\right)-\E^\varphi_{\mathrm{HF}}\left(\gamma^0_{\mathrm{ren}}\right)=\E^\varphi_{\mathrm{BDF}}\left(Q\right)\,,
\]
where
\[
Q=\gamma_{\mathrm{ren}}-\gamma^0_{\mathrm{ren}}=\gamma-P^0_-\,,
\]
and 
\begin{multline}\label{eq:bdfen}
    \E^\varphi_{\mathrm{BDF}}\left(Q\right)=\tr\left(\DD^0Q\right)+\int_{\R^2}\varphi\left(x\right)\rho_Q\left(x\right)\,dx \\
    +\frac{1}{2}\iint_{\R^2\times\R^2}\frac{\rho_Q\left(x\right)\rho_Q\left(y\right)}{\vert x-y\vert}\,dxdy -\frac{1}{2}\iint_{\R^2\times\R^2}\frac{\vert Q\left(x,y\right)\vert^2}{\vert x-y\vert}\,dxdy\,
\end{multline}
is the so-called \emph{Bogoliubov-Dirac-Fock} energy. It has the same form of the usual Hartree-Fock energy, but now $D^0$ is replaced by $\DD^0$, defined in \eqref{eq:MFD}, and $Q$ is not a density matrix itself, but rather the difference of two density matrices. Observe that $Q$ trivially verifies the property
\begin{equation}\label{eq:qineq}
-P^0_-\leq Q\leq 1-P^0_-=P^0_+\,.
\end{equation}
\begin{rmk}
    Subtracting the infinite energy of the free Fermi sea in order to get a bounded-below energy is an approach based on \cite{HaiLewSpa-2012-JMP}, where this Hartree-Fock type model for graphene is studied. It generalizes an idea used for the first time in \cite{HaiLewSol-2007-CPAM} in the study of the no-photon Hartree-Fock approximation of QED in the purely electrostatic case (thus, with the Fermi sea replaced by the Dirac sea). Later on, this idea has also been employed in the electromagnetic case in \cite{GraHaiLewSer-2013-ARMA} and in the positive temperature case in \cite{HaiLewSei-2008-RMP,Mor-2025-AHP}. In previous works always dealing with no-photon Hartree-Fock approximation of QED \cite{ChaIra-1989-JPB,HaiLewSer-2005-CMP}, another point of view based on the notion of normal ordering was given.
\end{rmk}
Let us now describe a functional framework that allows to properly define the energy functional \eqref{eq:bdfen}. As remarked in \cite{HaiLewSpa-2012-JMP}, a state has a finite relative kinetic energy if the latter is understood as
\begin{equation}\label{eq:trace}
\tr\left(\DD^0Q\right):=\tr\left(\abs{\DD^0}^{1/2}\left(Q^{++}-Q^{--}\right)\abs{\DD^0}^{1/2}\right)\,,
\end{equation}
assuming that
\[
\abs{\DD^0}^{1/2}Q^{\pm\pm}\abs{\DD^0}^{1/2}\in\mathfrak{S}^1\left(\mathfrak H_\Lambda\right)\,.
\]
Concerning the off-diagonal terms, this only allows to infer the following property
\[
Q^{\pm\mp}\abs{\DD^0}^{1/2}\in\mathfrak{S}^2\left(\mathfrak H_\Lambda\right)\,.
\]
Here, $\mathfrak{S}^1$ and $\mathfrak{S}^2$ are the Schatten ideals of trace-class and Hilbert-Schmidt operators, respectively, and
\[
Q^{\epsilon,\epsilon'}=P^0_{\epsilon} Q P^0_{\epsilon'}\,,\qquad \epsilon,\epsilon'\in\{+,-\}\,.
\]

The discussion above leads to consider the following space of operators
\begin{equation}\label{eq:X}
\widetilde{\mathcal X}=\{Q\in\mathcal B\left(\mathfrak{H}_\Lambda\right)\,:\, Q^*=Q\,,\abs{\DD^0}^{1/2}Q^{\pm\pm}\abs{\DD^0}^{1/2}\in\mathfrak{S}^1\,, Q^{\pm\mp}\abs{\DD^0}^{1/2}\in\mathfrak{S}^2\}.
\end{equation}
Nevertheless, we also need to well define the density $\rho_Q$ as well as the direct term in \eqref{eq:bdfen}. To this aim, observe that the natural space for the charge density is the so-called \emph{Coulomb space}
\[
\mathcal C=\left\{\varphi\,:\, D\left(\varphi,\varphi\right)=2\pi\int_{\R^2}\frac{\abs{\widehat{\varphi}\left(p\right)}^2}{\vert p\vert} \,dp<\infty\right\}\,.
\]
The quantity $D\left(\varphi,\varphi\right)\geq 0$ is the Coulomb energy of a charge density. By \cite[Lemma 4.5]{HaiLewSpa-2012-JMP}, if $Q\in\widetilde{X}$ and $-P^0_-\leq Q\leq P^0_+$, the charge density belongs to $\mathcal C$, $\rho_Q\in\mathcal C$, and then the direct term in \eqref{eq:bdfen} is finite. Under the same hypothesis and assuming \eqref{eq:vflb}, also the exchange term is well defined. Indeed, as also shown in the proof of \cref{stat:global-existence} below, Kato's inequality allows the kinetic term to control the exchange one. This motivates assumption \eqref{eq:vflb}, also in the statement of \cref{thm:gsfg}. The above discussion can be summarized in the following result.
\begin{lem}[{\cite[Lemma~4.6]{HaiLewSpa-2012-JMP}}]\label{lem:freebdf}
    Let $\Lambda>0$. Then the map $Q\mapsto \E^0_{\mathrm{BDF}}\left(Q\right)$ is well defined and continuous on $\widetilde{\mathcal X}$. If $v_{\mathrm F}$ satisfies
    \begin{equation*}
    v_{\mathrm F}\geq v_\mathrm{c}=h^{-1}\left(2\right)\,,
    \end{equation*}
    then one has
    \[
    0\leq \E^0_{\mathrm{BDF}}\left(Q\right)<\infty\,,
    \]
    for all $-P^0_-\leq Q\leq P^0_+$ such that $\abs{\DD^0}^{1/2}Q^{\pm\pm}\abs{\DD^0}^{1/2}\mathfrak{S}^1$. Moreover, $\E^0_{\mathrm{BDF}}\left(Q\right)=0$ if and only if $Q=0$.
\end{lem}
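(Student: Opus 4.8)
The plan is to bound separately the three contributions to the Bogoliubov--Dirac--Fock energy \eqref{eq:bdfen} with $\varphi\equiv 0$: the relative kinetic energy $\tr(\DD^0 Q)$, the direct term $\tfrac12 D(\rho_Q,\rho_Q)$ and the exchange term $-\tfrac12\iint\abs{x-y}^{-1}\abs{Q(x,y)}^2\,dx\,dy$. The first two are easily seen to be continuous, and the direct one is non-negative; the real point is to show that the exchange term is absorbed by the kinetic energy precisely when \eqref{eq:vflb} holds. For $Q\in\XX$ the operator $\abs{\DD^0}^{1/2}(Q^{++}-Q^{--})\abs{\DD^0}^{1/2}$ is trace-class by the definition \eqref{eq:X}, so the quantity $\tr(\DD^0 Q)$ in \eqref{eq:trace} is a finite real number depending continuously on $Q$; and on the admissible set $-P^0_-\le Q\le P^0_+$ one reads off from the diagonal blocks of $Q+P^0_-\ge 0$ and $P^0_+-Q\ge 0$ that $Q^{++}\ge 0$, $Q^{--}\le 0$, hence $\tr(\DD^0 Q)\ge 0$. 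For the direct term, \cite[Lemma~4.5]{HaiLewSpa-2012-JMP} gives that $Q\mapsto\rho_Q$ maps $\XX$ continuously into the Coulomb space $\CC$, so $\tfrac12 D(\rho_Q,\rho_Q)=\tfrac12\norm{\rho_Q}_{\CC}^2\ge 0$ is continuous as well.

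The exchange term is the crux, and the tool is a two-dimensional Kato-type inequality. By \eqref{eq:effd} and $(\boldsymbol\sigma\cdot p)^2=\abs p^2$, the operator $\abs{\DD^0}=v_{\mathrm{eff}}(\abs p)\abs p$ is a positive scalar Fourier multiplier, and by scaling the sharp constant $h(v_\mathrm{F})$ in \eqref{eq:h-function} is independent of the cut-off, so that
\[
\int_{\R^2}\frac{\abs{\varphi(x)}^2}{\abs{x}}\,dx\le h(v_\mathrm{F})\,\scalprod{\varphi}{\abs{\DD^0}\varphi}\qquad\text{for every }\Lambda>0\text{ and }\varphi\in\HH_\Lambda .
\]
Fixing $y$ and a spin pair $\sigma,\sigma'$, applying this to the scalar function $x\mapsto Q(x,y)_{\sigma,\sigma'}$ (which has Fourier support in $B(0,\Lambda)$) after the translation $x\mapsto x+y$ (which preserves the Fourier support and commutes with the multiplier $\abs{\DD^0}$), then summing over $\sigma,\sigma'$ and integrating in $y$, one obtains
\[
\iint_{\R^2\times\R^2}\frac{\abs{Q(x,y)}^2}{\abs{x-y}}\,dx\,dy\;\le\;h(v_\mathrm{F})\,\tr\!\bigl(\abs{\DD^0}^{1/2}Q^2\abs{\DD^0}^{1/2}\bigr)\;=\;h(v_\mathrm{F})\,\norm{\abs{\DD^0}^{1/2}Q}_{\mathfrak{S}^2}^2 .
\]
The right-hand side is finite for every $Q\in\XX$ by the Schatten-class estimates of \cite{HaiLewSpa-2012-JMP}, and polarising the non-negative quadratic form $Q\mapsto\iint\abs{x-y}^{-1}\abs{Q(x,y)}^2\,dx\,dy$ then shows that it, and therefore $\E^0_{\mathrm{BDF}}$, is continuous on $\XX$.

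It remains to prove positivity and the equality case under \eqref{eq:vflb}. Assume $-P^0_-\le Q\le P^0_+$ with $\abs{\DD^0}^{1/2}Q^{\pm\pm}\abs{\DD^0}^{1/2}\in\mathfrak{S}^1$, and write $Q=\gamma-P^0_-$ with $0\le\gamma\le 1$. Using $P^0_+=I-P^0_-$, a short computation gives the operator identity $Q^{++}-Q^{--}-Q^2=\gamma(I-\gamma)\ge 0$; conjugating by $\abs{\DD^0}^{1/2}$ and taking traces yields $\tr(\abs{\DD^0}^{1/2}Q^2\abs{\DD^0}^{1/2})\le\tr(\DD^0 Q)$. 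Since $v_\mathrm{F}\mapsto h(v_\mathrm{F})$ is non-increasing and $v_\mathrm{c}=h^{-1}(2)$, condition \eqref{eq:vflb} gives $h(v_\mathrm{F})\le 2$, and combining this with the Kato bound above we get
\[
\E^0_{\mathrm{BDF}}(Q)\;\ge\;\Bigl(1-\tfrac{h(v_\mathrm{F})}{2}\Bigr)\tr(\DD^0 Q)+\tfrac12 D(\rho_Q,\rho_Q)\;\ge\;0 ,
\]
all three terms being finite. If the left-hand side vanishes, then $D(\rho_Q,\rho_Q)=0$ and, when $v_\mathrm{F}>v_\mathrm{c}$ so that $h(v_\mathrm{F})<2$, also $\tr(\DD^0 Q)=0$; being a non-negative trace-class operator of zero trace, $\abs{\DD^0}^{1/2}(Q^{++}-Q^{--})\abs{\DD^0}^{1/2}$ vanishes, and since $\abs{\DD^0}^{1/2}$ is injective (its symbol $v_{\mathrm{eff}}(\abs p)\abs p$ vanishes only on the null set $\{p=0\}$) this forces $Q^{++}=Q^{--}=0$; plugging this into $Q+P^0_-\ge 0$ gives $Q^{+-}=Q^{-+}=0$, i.e.\ $Q=0$, the converse being trivial. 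The borderline case $v_\mathrm{F}=v_\mathrm{c}$ requires in addition tracking the equality cases in the Kato inequality and in $\gamma(I-\gamma)\ge 0$, as in \cite{HaiLewSpa-2012-JMP}. The only genuinely delicate step is the exchange estimate in the form $\iint\abs{x-y}^{-1}\abs{Q(x,y)}^2\le h(v_\mathrm{F})\tr(\DD^0 Q)$ on the admissible set: it is here, and only here, that the hypothesis $v_\mathrm{F}\ge v_\mathrm{c}$ enters, via the sharp constant $h(v_\mathrm{F})$ in the two-dimensional Kato inequality together with the operator inequality $Q^2\le Q^{++}-Q^{--}$.
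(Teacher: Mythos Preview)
The paper does not give its own proof of this lemma: it is quoted verbatim from \cite[Lemma~4.6]{HaiLewSpa-2012-JMP}, and the surrounding text only sketches the mechanism (Kato's inequality controls the exchange term by the kinetic one, together with the operator inequality $Q^2\le Q^{++}-Q^{--}$ on the admissible set). Your argument follows precisely this route --- the same Kato-type bound with sharp constant $h(v_\mathrm{F})$, the same identity $Q^{++}-Q^{--}-Q^2=\gamma(I-\gamma)\ge 0$, and the same lower bound $\E^0_{\mathrm{BDF}}(Q)\ge (1-h(v_\mathrm{F})/2)\tr(\DD^0 Q)+\tfrac12 D(\rho_Q,\rho_Q)$ --- so there is nothing to compare: your proof \emph{is} the argument the paper alludes to and uses again in the proof of \cref{stat:energy-boundedness}.

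Two minor remarks. First, for the claim that the exchange term is finite and continuous on \emph{all} of $\XX$ (not just on the admissible cone), you rely on $\norm{\abs{\DD^0}^{1/2}Q}_{\mathfrak S^2}<\infty$; this does not follow immediately from $\abs{\DD^0}^{1/2}Q^{\pm\pm}\abs{\DD^0}^{1/2}\in\mathfrak S^1$ alone since $\abs{\DD^0}^{-1/2}$ is unbounded, so your appeal to the Schatten estimates of \cite{HaiLewSpa-2012-JMP} is genuinely needed there rather than cosmetic. Second, for the equality case you could avoid the block-positivity argument entirely: once $Q^{++}=Q^{--}=0$, the inequality $Q^2\le Q^{++}-Q^{--}=0$ already forces $Q=0$.
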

The above lemma states that the BDF energy is well defined and positive definite when $\varphi=0$. The external electrostatic potential, corresponding for instance to a local defect in the monolayer of carbon atoms, can be modeled by a potential of the form
\[
\varphi=-\nu\ast\frac{1}{\vert x\vert}\,,
\]
with $\nu$ the associated density of charge. Thus, the BDF energy can be rewritten as
\[
\E^\varphi_{\mathrm{BDF}}\left(Q\right)=\E^0_{\mathrm{BDF}}\left(Q\right)-D\left(\rho_Q,\nu\right)\,,
\]
where the last term is given by the scalar product corresponding to the Coulomb energy. If $v_{\mathrm F}$ satisfies \eqref{eq:vflb}, then the positivity of $\E^0_{\mathrm{BDF}}$ gives
\begin{equation}\label{eq:energy-lower-bound}
    \E^\varphi_{\mathrm{BDF}}\left(Q\right)\geq-\frac{1}{2}D\left(\nu,\nu\right)\,,
\end{equation}
which means that the BDF energy is bounded from below if $\nu\in\mathcal C$. Thus, we can consider the following minimization problem
\begin{multline}\label{eq:bdfmin}
    E^\varphi_{\mathrm{BDF}}=\inf\{\E^\varphi_{\mathrm{BDF}}\left(Q\right)\,:\, Q\in\mathcal B\left(\mathfrak H_\Lambda\right)\,,\\
    -P^0_-\leq Q=Q^*\leq P^0_+\,, \abs{\DD^0}^{1/2} Q^{\pm\pm}\abs{\DD^0}^{1/2}\in\mathfrak{S}^1\}\,.
\end{multline}
The following result from \cite{HaiLewSpa-2012-JMP} proves the existence of a ground state of interacting graphene in the presence of defects.
\begin{thm}[{\cite[Theorem~2]{HaiLewSpa-2012-JMP}}]\label{stat:existence-minimizer-ext-potential}
    Let $\Lambda>0$ be a fixed UV cut-off and let $v_{\mathrm{F}}$ satisfy \eqref{eq:vflb}. For any $\nu\in \mathcal C$, the problem \eqref{eq:bdfmin} admits at least one minimiser, which also satisfies the following self-consistent equation
    \begin{equation}\label{eq:scQ}
        Q+P^0_-=\mathds{1}_{\mathcal I}\left(\DD^0-\nu\ast\frac{1}{\vert x\vert}+\rho_Q\ast\frac{1}{\vert x\vert}- \frac{Q(x,y)}{\vert x-y\vert} \right)\,,
    \end{equation}
    with $\mathcal I=\intoo{-\infty}{0}$ or $\mathcal I=\intoc{-\infty}{0}$. In terms of the corresponding density matrix $\gamma=Q+P^0_-$, this rewrites as
    \begin{equation}\label{eq:gammaop}
        \gamma=\mathds{1}_{\mathcal{I}}\left(v_{\mathrm F}\boldsymbol{\sigma}\cdot\left(-i\nabla\right)-\nu\ast\frac{1}{\vert x\vert}+\rho_{\gamma-I/2}\ast\frac{1}{\vert x\vert}-\frac{(\gamma-I/2)(x,y)}{\vert x-y\vert} \right)\,.
    \end{equation}
\end{thm}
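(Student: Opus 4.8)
The plan is to use the direct method of the calculus of variations, following the strategy devised for the three-dimensional no-photon QED model in \cite{HaiLewSer-2005-CMP,HaiLewSol-2007-CPAM} and already employed in this setting in \cite{HaiLewSpa-2012-JMP}. Writing $\E^\varphi_{\mathrm{BDF}}(Q)=\E^0_{\mathrm{BDF}}(Q)-D(\rho_Q,\nu)$, the lower bound \eqref{eq:energy-lower-bound} — which rests on \cref{lem:freebdf} and on $\nu\in\mathcal C$ — shows that $-\tfrac12 D(\nu,\nu)\le E^\varphi_{\mathrm{BDF}}\le\E^\varphi_{\mathrm{BDF}}(0)=0$, so the infimum in \eqref{eq:bdfmin} is finite. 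I fix a minimising sequence $(Q_n)_n$ in the admissible set.

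The first real step is to obtain a priori bounds on $(Q_n)_n$. The two key inputs are: the operator inequality $Q_n^2\le Q_n^{++}-Q_n^{--}$, valid on the constraint set \eqref{eq:qineq}, which gives $\tr(\abs{\DD^0}Q_n^2)\le\tr(\DD^0 Q_n)$; and Kato's inequality, which in this framework takes the form $\abs{x}^{-1}\le h(v_{\mathrm F})\,\abs{\DD^0}$ as quadratic forms on $\mathfrak{H}_\Lambda$ (this is precisely how $h$ is designed in \eqref{eq:h-function}) and, using the translation invariance of $\abs{\DD^0}$, yields $\iint\frac{\abs{Q_n(x,y)}^2}{\abs{x-y}}\,dxdy\le h(v_{\mathrm F})\,\tr(\abs{\DD^0}Q_n^2)$. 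Since $h$ is nonincreasing, \eqref{eq:vflb} gives $h(v_{\mathrm F})\le 2$, whence
\[
\E^0_{\mathrm{BDF}}(Q_n)\ge\Bigl(1-\tfrac12 h(v_{\mathrm F})\Bigr)\tr(\abs{\DD^0}Q_n^2)+\tfrac12 D(\rho_{Q_n},\rho_{Q_n})\,.
\]
As $\E^0_{\mathrm{BDF}}(Q_n)=\E^\varphi_{\mathrm{BDF}}(Q_n)+D(\rho_{Q_n},\nu)$ is bounded and $\abs{D(\rho_{Q_n},\nu)}\le D(\rho_{Q_n},\rho_{Q_n})^{1/2}D(\nu,\nu)^{1/2}$, this bootstraps to uniform bounds on $D(\rho_{Q_n},\rho_{Q_n})$ and on $\tr(\abs{\DD^0}Q_n^2)=\norm{\abs{\DD^0}^{1/2}Q_n}_{\mathfrak{S}^2}^2$ — hence on $\norm{\abs{\DD^0}^{1/2}Q_n^{\pm\pm}\abs{\DD^0}^{1/2}}_{\mathfrak{S}^1}$ and $\norm{Q_n^{\pm\mp}\abs{\DD^0}^{1/2}}_{\mathfrak{S}^2}$ — while trivially $\norm{Q_n}_{\mathcal B(\mathfrak{H}_\Lambda)}\le 1$. (In the borderline case $v_{\mathrm F}=v_{\mathrm c}$, where the coefficient $1-\tfrac12 h(v_{\mathrm F})$ vanishes, one has to invoke the strict form of the exchange inequality for admissible $Q\neq 0$; I would treat this case separately.)

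Passing to a subsequence, $Q_n\rightharpoonup Q$ weakly-$*$ in $\mathcal B(\mathfrak{H}_\Lambda)$, with the weighted pieces in \eqref{eq:X} converging weakly-$*$ in $\mathfrak{S}^1$, resp.\ weakly in $\mathfrak{S}^2$, and $\rho_{Q_n}\rightharpoonup\rho_Q$ weakly in $\mathcal C$ (one also checks that the density of the weak limit is the weak limit of the densities, which the cut-off $\Lambda$ makes routine). The constraint \eqref{eq:qineq} is closed and convex, so it passes to the limit, and $Q\in\XX$ by lower semicontinuity of the norms in \eqref{eq:X}. The direct term $\tfrac12 D(\rho_Q,\rho_Q)$ is then weakly lower semicontinuous and $-D(\rho_Q,\nu)$ is weakly continuous on $\mathcal C$, but one cannot argue term by term for the remaining contributions: the relative kinetic energy \eqref{eq:trace} is not obviously weakly lower semicontinuous (it is a difference of weakly lower semicontinuous quantities) and the exchange term enters \eqref{eq:bdfen} with a negative sign. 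This — the lack of compactness, equivalently the failure of weak lower semicontinuity of $\tr(\DD^0 Q)-\tfrac12\iint\frac{\abs{Q(x,y)}^2}{\abs{x-y}}\,dxdy$ — is \emph{the main obstacle}. I would overcome it, as is standard in the BDF theory, by a concentration--compactness/localisation argument: one first shows $E^\varphi_{\mathrm{BDF}}<0$ whenever $\nu\neq 0$ (testing \eqref{eq:bdfmin} with $t\,(\mathds{1}_{\intoo{-\infty}{0}}(\DD^0-\nu\ast\tfrac{1}{\abs{x}})-P^0_-)$ for small $t>0$); then, since the external potential $\varphi=-\nu\ast\tfrac{1}{\abs{x}}$ is localised, it cannot interact with any part of $Q_n$ escaping to spatial infinity, and such an escaping profile — asymptotically admissible — carries a \emph{nonnegative} free BDF energy by \cref{lem:freebdf} (here \eqref{eq:vflb} enters once more). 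Consequently the energy splits, in the limit, as the BDF energy of $Q$ plus nonnegative contributions of the escaping profiles, forcing the latter to vanish and giving $\E^\varphi_{\mathrm{BDF}}(Q)\le\liminf_n\E^\varphi_{\mathrm{BDF}}(Q_n)=E^\varphi_{\mathrm{BDF}}$; hence $Q$ is a minimiser.

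It remains to derive the self-consistent equation. Let $Q$ be a minimiser and $Q'$ any admissible operator; the segment $Q_t=(1-t)Q+tQ'$ is admissible for $t\in\intcc{0}{1}$ by convexity, and differentiating $\E^\varphi_{\mathrm{BDF}}(Q_t)$ at $t=0^+$ (justified since the energy is the linear term $\tr(\DD^0 Q)$ plus terms quadratic in $Q$) yields $\tr(\mathcal D_Q(Q'-Q))\ge 0$, where
\[
\mathcal D_Q=\DD^0-\nu\ast\frac{1}{\abs{x}}+\rho_Q\ast\frac{1}{\abs{x}}-\frac{Q(x,y)}{\abs{x-y}}
\]
is the mean-field operator, well defined and self-adjoint on $\mathfrak{H}_\Lambda$ by the KLMN theorem, the perturbation of $\DD^0$ being relatively form-bounded with bound at most $1$ under \eqref{eq:vflb} by the estimates of the second step. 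Hence $\gamma:=Q+P^0_-$ minimises the linear functional $\gamma'\mapsto\tr(\mathcal D_Q\gamma')$ over $\{0\le\gamma'\le 1\}$, which forces $\mathds{1}_{\intoo{-\infty}{0}}(\mathcal D_Q)\le\gamma\le\mathds{1}_{\intoc{-\infty}{0}}(\mathcal D_Q)$; choosing $\gamma$ equal to one of these two extreme spectral projectors (a short extra argument being needed when $\ker\mathcal D_Q\neq\{0\}$) gives \eqref{eq:scQ} with $\mathcal I=\intoo{-\infty}{0}$ or $\mathcal I=\intoc{-\infty}{0}$, and rewriting in terms of $\gamma$ through \eqref{eq:MFD} produces \eqref{eq:gammaop}.
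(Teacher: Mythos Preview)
The paper does not include a proof of this theorem: it is quoted as \cite[Theorem~2]{HaiLewSpa-2012-JMP} and serves here only as background for the dynamical result, \cref{stat:main-result}. There is therefore no in-paper proof to compare your attempt against. Your sketch is a reconstruction of the standard BDF strategy of the cited references \cite{HaiLewSer-2005-CMP,HaiLewSpa-2012-JMP}: coercivity via $Q^2\le Q^{++}-Q^{--}$ on the constraint set together with the Kato-type bound encoded by $h$, extraction of weak limits in the space $\XX$, weak lower semicontinuity resting ultimately on the nonnegativity of $\E^0_{\mathrm{BDF}}$ on admissible states (\cref{lem:freebdf}), and a first-variation argument for the self-consistent equation. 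This is the expected route and is broadly correct.

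One genuine gap to flag: your last paragraph only yields
\[
\mathds{1}_{\intoo{-\infty}{0}}(\mathcal D_Q)\le\gamma\le\mathds{1}_{\intoc{-\infty}{0}}(\mathcal D_Q)\,,
\]
which is the sharp conclusion of the linear minimisation over $\{0\le\gamma'\le 1\}$. Promoting this to the equality $\gamma=\mathds{1}_{\mathcal I}(\mathcal D_Q)$ stated in \eqref{eq:scQ}--\eqref{eq:gammaop} is not automatic when $\ker\mathcal D_Q\neq\{0\}$: a minimiser of \eqref{eq:bdfmin} could in principle be any self-adjoint operator sandwiched between those two projectors, acting as an arbitrary $0\le\delta\le 1$ on the kernel. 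The ``short extra argument'' you allude to must either rule out a nontrivial kernel or explain why, among all minimisers, one can always pick a spectral projector; this is not a triviality and deserves to be spelled out. A more minor point: your trial state for $E^\varphi_{\mathrm{BDF}}<0$ is written as $t\cdot(\mathds{1}_{\intoo{-\infty}{0}}(\DD^0-\nu\ast|x|^{-1})-P^0_-)$, but in the massless case it is not obvious that this operator lies in $\XX$ (there is no spectral gap), so the admissibility and the sign of the linear variation both need care.
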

The density matrix $\gamma$ in \eqref{eq:gammaop} is the projector onto the negative spectral subspace of the mean-field operator
\begin{equation}\label{eq:mfdv}
\DD_Q=v_{\mathrm F}\boldsymbol{\sigma}\cdot\left(-i\nabla\right)-\nu\ast\frac{1}{\vert x\vert}+\rho_{\gamma-I/2}\ast\frac{1}{\vert x\vert}-\frac{\left(\gamma-I/2\right)\left(x,y\right)}{\vert x-y\vert}\,.
\end{equation}
It takes into account both the external potential and the self-consistent one created by the state $Q$ itself.

Our goal in this paper is to study the equation describing the time evolution of such an infinite rank projection $\gamma\left(t\right)$. This equation is a generalization of the BDF evolution equation to the two-dimensional case of graphene, and formally reads
\begin{equation}\label{eq:evolution-equation}
    i\frac{d}{dt}\gamma\left(t\right)=\left[\DD_{Q\left(t\right)}, \gamma\left(t\right)\right],
\end{equation}
where $Q\left(t\right)=\gamma\left(t\right)-P^0_-$ and $\DD_{Q\left(t\right)}$ is the mean-field operator defined above. 

Notice that the minimizer $\overline Q$ of the BDF energy given by \cref{stat:existence-minimizer-ext-potential} is a stationary solution of \eqref{eq:evolution-equation}, as it verifies
\[
\left[D_{\overline{Q}}, \overline{\gamma}\right]=0.
\]
\cref{eq:evolution-equation} is similar to the well known time-dependent Hartree-Fock model of non-relativistic quantum mechanics \cite{CanLeB-1999-MMMAS,ChaGla-1975-JMP}. In the relativistic framework, an equation similar to \eqref{eq:evolution-equation} was first introduced by Dirac in \cite{Dir-1934-MPCPS,Dir-1934-SR}, neglecting the exchange term which is now well established in mean-field theories for fermions. \cref{eq:evolution-equation} can be seen as the natural evolution equation of the BDF model for graphene, since the right-hand side of \eqref{eq:evolution-equation} is the formal derivative of the energy  $\E^\varphi_{\mathrm{BDF}}$ \cite{ChaIra-1989-JPB,HaiLewSpa-2005-LMP}.

Looking at the dynamical problem \eqref{eq:evolution-equation}, hereafter we are going to consider a time-dependent charge density $\nu=\nu(t)$ such that
\begin{equation}\label{eq:n(t)}
\nu\in C^1\left(\intco{0}{+\infty},\CC\right).
\end{equation}
Thus, our goal is to study the following system
\begin{equation}\label{eq:main-cauchy-problem}
    \begin{cases}
        i\frac{d}{dt}\gamma\left(t\right)=\left[\DD_{Q\left(t\right)}, \gamma\left(t\right)\right]\,, \\[6pt]
        \gamma\left(0\right)=\gamma_{I}\,,\\[6pt]
        \gamma\left(t\right)^2=\gamma\left(t\right)\,,\quad Q\left(t\right)=\gamma\left(t\right)-P^0_-\,,
    \end{cases}
\end{equation}
where $P_I$ is the initial datum and $\DD_{Q(t)}$ is as in \eqref{eq:mfdv}, assuming \eqref{eq:n(t)}.
\smallskip

In the next section we address global well-posedness for the Cauchy problem \eqref{eq:main-cauchy-problem}, under suitable assumptions. In the 3D massive case, the analogous system has been studied in \cite{HaiLewSpa-2005-LMP} with an external electrostatic, but also briefly discussing also the time-dependent case (see \cite[Remark~2.2]{HaiLewSpa-2005-LMP}). In \cite{Mor-2025-JDE}, the same system coupled with Newtonian nuclear dynamics has been considered, in order to study relativistic electrons interacting with classical nuclei.
\subsection{The main result}
We are now in a position to state our main result about the system \eqref{eq:main-cauchy-problem}. Before doing so, let us define the following function space: 
\begin{multline}\label{eq:Y}
    \YY=\{Q\in\B\left(\HH_\Lambda\right)\,: \,Q^*=Q\,,\,\abs{\DD^0}^{1/2}\left(Q^{++}-Q^{--}\right)\abs{\DD^0}^{1/2}\in\mathfrak S^1\,,\\
    Q\abs{\DD^0}^{1/2}\in\mathfrak S^2\,,\rho_Q\in\mathcal C\}\,.
\end{multline}
Notice that the above space is different from \eqref{eq:X}, which is the space introduced in the study of the BDF energy in the previous section. Compared to that case, here the trace-class condition has been modified and it is essentially defined in terms of the kinetic part of the BDF energy. Indeed, it coincides with finiteness of the kinetic term, provided that 
\[
-P^0_-\leq Q\leq P^0_+\,,
\]
since in that case \cite{BacBarHelSie-1999-CMP,HaiLewSpa-2012-JMP} $Q^{++}-Q^{--}\geq Q^2$. Ensuring this property will be crucial in the proof of \cref{stat:main-result}, precisely in \cref{stat:global-existence}. Notice that we also add the requirement that $\rho_Q\in\mathcal C$, which allows to deal with states with finite Coulomb energy, and does not seem to follow from the other properties in the definition \eqref{eq:Y}.

\begin{thm}\label{stat:main-result}
Let $\Lambda>0$, $\nu\in C^1\left(\intco{0}{+\infty},\CC\right)$ and $\gamma_I$ such that $Q_I=\gamma_I-P^0_-\in\YY$. If the Fermi velocity $v_\mathrm{F}$ satisfies
\[
v_\mathrm{F} \geq v_\mathrm{c}=h^{-1}\left(2\right),
\]
the Cauchy problem \eqref{eq:main-cauchy-problem} admits a unique global solution
\[
\gamma\left(t\right)\in C^1\left(\intco{0}{+\infty},P^0_-+\YY\right)\,.
\]
\end{thm}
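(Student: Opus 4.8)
The plan is to treat \eqref{eq:main-cauchy-problem} as an abstract semilinear evolution equation on the affine Banach space $P^0_- + \YY$ and run a fixed-point/continuation argument, following the strategy of \cite{HaiLewSpa-2005-LMP} and \cite{Mor-2025-JDE} for the 3D case, adapting it to the 2D massless kinetic operator $\DD^0$ and the Coulomb space $\CC$ in two dimensions. First I would recast the equation in Duhamel (integral) form: writing $U^0(t) = e^{-it\DD^0}$ for the free propagator generated by the mean-field operator of the vacuum, the commutator equation $i\dot\gamma = [\DD_{Q(t)},\gamma]$ becomes $Q(t) = U^0(t) Q_I U^0(t)^* - i\int_0^t U^0(t-s)\,[\,W_{Q(s)} - \nu(s)\ast|x|^{-1},\, \gamma(s)\,]\,U^0(t-s)^*\,ds$, where $W_{Q} = \rho_Q\ast|x|^{-1} - Q(x,y)/|x-y|$ collects the self-consistent direct and exchange potentials. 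The key point is that $U^0(t)$ acts as a one-parameter group of isometries on $\YY$ (it commutes with $|\DD^0|$ and with the spectral projections $P^0_\pm$, so it preserves each of the Schatten-norm blocks in \eqref{eq:Y}, and it preserves $\rho_Q\in\CC$ since conjugation by a unitary that is a Fourier multiplier leaves the density controlled).

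The second step is the local-in-time existence: I would show that the nonlinear map $Q \mapsto W_Q$ is locally Lipschitz from $\YY$ to the space of potentials for which $[\,\cdot\,,\gamma]$ maps back continuously into $\YY$. This requires the two-dimensional analogues of the standard BDF estimates: (i) the density estimate $\|\rho_Q\|_{\CC} \lesssim \| \,|\DD^0|^{1/2} Q\,\|_{\schatten 2}$ plus the extra $\rho_Q\in\CC$ bookkeeping built into \eqref{eq:Y}; (ii) a Kato-type inequality $|x|^{-1} \le C|p|$ in 2D (equivalently $D(\rho,\rho)\lesssim \langle\rho,|p|^{-1}\rho\rangle$ combined with the Hardy-type control implicit in \eqref{eq:h-function}) to bound the exchange term $Q(x,y)/|x-y|$ in operator norm and in the relevant Schatten norms by the kinetic quantity $\| |\DD^0|^{1/2} Q |\DD^0|^{1/2}\|$; and (iii) boundedness of the external term, which is immediate from $\nu\in C^1([0,\infty),\CC)$. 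Combining these with the algebra structure of the Schatten ideals gives a locally Lipschitz vector field on $\YY$, and the standard Picard iteration on $C([0,T],\YY)$ for $T$ small yields a unique local solution; the $C^1$ regularity in $t$ then follows by differentiating the Duhamel formula and using $\nu\in C^1$.

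The third step — and the one I expect to be the genuine obstacle — is the global a priori bound that upgrades the local solution to a global one. The idea is that the flow preserves the constraint $\gamma(t)^2 = \gamma(t)$ (so $-P^0_-\le Q(t)\le P^0_+$ is propagated, because conjugation by the solution's own unitary propagator preserves the projector property — this needs a small argument since $\DD_{Q(t)}$ is time dependent, typically via a two-parameter propagator $U_{Q}(t,s)$ whose existence must itself be established), and that along the flow the BDF energy $\E^{\varphi(t)}_{\mathrm{BDF}}(Q(t))$ evolves in a controlled way: $\frac{d}{dt}\E^{\varphi(t)}_{\mathrm{BDF}}(Q(t)) = \int_{\R^2} \dot\varphi(t,x)\rho_{Q(t)}(x)\,dx$, since the commutator structure kills the variation coming from $Q$. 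Using $\dot\varphi = -\dot\nu\ast|x|^{-1}$ with $\dot\nu\in C^0([0,\infty),\CC)$, this gives $|\frac{d}{dt}\E^{\varphi(t)}_{\mathrm{BDF}}| \le \|\dot\nu(t)\|_{\CC}\,\|\rho_{Q(t)}\|_{\CC}$, and then a Grönwall argument combined with the coercivity furnished by \cref{lem:freebdf} — precisely, under $v_\mathrm{F}\ge v_\mathrm{c}$ one has $\E^0_{\mathrm{BDF}}(Q)\ge 0$ together with the lower bound $\E^\varphi_{\mathrm{BDF}}(Q) \ge \tfrac12\tr(\DD^0(Q^{++}-Q^{--})) - \tfrac12 D(\nu,\nu)$ type estimate controlling the kinetic norm — bounds $\| \,|\DD^0|^{1/2}(Q^{++}(t)-Q^{--}(t))|\DD^0|^{1/2}\|_{\schatten 1}$ and $\|\rho_{Q(t)}\|_{\CC}$ uniformly on every finite interval $[0,T]$.

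Finally, to control the full $\YY$-norm (not just the energy-controlled blocks) one uses the Duhamel representation once more: the off-diagonal Hilbert–Schmidt norm $\|Q(t)|\DD^0|^{1/2}\|_{\schatten 2}$ satisfies a linear integral inequality driven by the already-controlled quantities, so another Grönwall closes the estimate. Since the local existence time $T$ depends only on the $\YY$-norm of the datum, the uniform bound on $[0,T^*)$ for any putative maximal $T^* <\infty$ forces $T^* = +\infty$, giving the global solution $\gamma\in C^1([0,\infty), P^0_- + \YY)$, and uniqueness is inherited from the local uniqueness by the usual connectedness argument. The main difficulties to watch are: the construction and regularity of the time-dependent propagator $U_{Q}(t,s)$ (needed both for propagation of the projector constraint and for the energy identity), and verifying that all the 2D Coulomb/Kato estimates hold with the logarithmically divergent effective velocity $v_{\mathrm{eff}}(p)$ — here the uniform lower bound \eqref{eq:dp}, $|\widehat{\DD}^0(p)|\ge (v_\mathrm{F}+g(1))|p|$, is what saves the day, since it means $|\DD^0|^{1/2}$ is comparable to $|p|^{1/2}$ up to the fixed constant $v_\mathrm{F}+g(1)>0$.
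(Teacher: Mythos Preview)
Your outline is sound and would lead to a proof, but it takes a more elaborate route than the paper, essentially because you do not exploit one key simplification: the ultraviolet cutoff makes $\DD^0$ a \emph{bounded} operator on $\HH_\Lambda$, so \eqref{eq:Q-Cauchy-problem} is a genuine ODE on the Banach space $\YY$. The paper therefore dispenses with the Duhamel/interaction-picture machinery entirely and simply proves that the full right-hand side $F(Q)=[\DD_Q,Q]+[V_Q,P^0_-]$ is a locally Lipschitz map $\YY\to\YY$ (your estimates (i)--(iii) are essentially the content of that lemma), after which local existence is plain Cauchy--Lipschitz. This sidesteps your claim that conjugation by $U^0(t)$ preserves the $\CC$-part of the $\YY$-norm, which is less immediate than you suggest since $e^{-it\widehat{\DD}^0(p)}$ is matrix-valued and does not commute through $\tr_{\C^2}$. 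The projector propagation likewise needs no two-parameter propagator $U_Q(t,s)$: since $\DD_{Q(t)}\in C^1([0,T_{\max}),\B(\HH_\Lambda))$, one simply observes that $A(t)=\gamma(t)^2-\gamma(t)$ solves the bounded-coefficient linear problem $i\dot A=[\DD_{Q(t)},A]$ with $A(0)=0$, hence $A\equiv 0$. Finally, the paper does not run a separate Gr\"onwall on the Hilbert--Schmidt block: once $\gamma(t)$ is a projector one has $Q^{++}-Q^{--}\ge Q^2$, so $\tr(\DD^0 Q)\ge \||\DD^0|^{1/2}Q\|_{\mathfrak S^2}^2$, and the kinetic trace is already controlled by the energy under $v_{\mathrm F}\ge v_{\mathrm c}$. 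Your energy identity $\frac{d}{dt}\E^\varphi_{\mathrm{BDF}}=-D(\dot\nu,\rho_Q)$ and the ensuing Gr\"onwall on $\E^\varphi_{\mathrm{BDF}}+\tfrac12\|\nu\|_\CC^2$ match the paper exactly. What your Duhamel framework would buy is robustness in a setting without cutoff (unbounded kinetic operator); here it is avoidable overhead.
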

\begin{rmk}
    Notice that the assumption on the Fermi velocity $v_F$ can be removed when the exchange term is neglected. This can be easily seen from the proof of \cref{stat:energy-boundedness}, where it is only used to exploit Kato's inequality to estimate the exchange term with the kinetic one. Therefore, in the reduced model our result holds for any $v_F>0$.
\end{rmk}
The next section is devoted to the proof of \cref{stat:main-result}.
\section{Global well-posedness: proof of \texorpdfstring{\cref{stat:main-result}}{main result}}\label{sec:global-well-posedness}
Our proof follows by classical arguments, already used also in Hartree-Fock theory \cite{CanLeB-1999-MMMAS,ChaGla-1975-JMP} and in the massive 3D analogue of the problem \cite{HaiLewSpa-2005-LMP, Mor-2025-JDE}
. We work in the space \eqref{eq:Y}, defined in the previous section, endowed with its natural norm
\begin{multline}\label{eq:norm}
\norm{Q}= \norm{ \abs{\DD^0}^{1/2}\left(Q^{++}-Q^{--}\right)\abs{\DD^0}^{1/2} }_{\mathfrak S^1\left(\HH_\Lambda\right)}
+\norm{\abs{\DD^0}^{1/2}Q}_{\mathfrak S^2\left(\HH_\Lambda\right)}+\norm{\rho_Q}_\CC
\end{multline}
Then, we rewrite the evolution problem in terms of $Q\left(t\right)=\gamma\left(t\right)-P^0_{-}$, as follows
\begin{equation}\label{eq:Q-Cauchy-problem}
    \begin{cases}
        i\frac{d}{dt}Q\left(t\right)=\left[\DD_{Q\left(t\right)}, Q\left(t\right)\right]+\left[V_{Q\left(t\right)},P^0_-\right], \\[6pt]
        Q\left(0\right)=Q_{I}\in \YY,
    \end{cases}
\end{equation}
where
\[
V_{Q}=P_\Lambda\left(\left(\rho_Q-\nu\left(t\right)\right)\ast \frac{1}{|\cdot|}-\frac{Q\left(x,y\right)}{\abs{x-y}}\right)P_\Lambda\,.
\]
Notice that the presence of the projector, which in Fourier space reads as
\[
\widehat{P}_{\Lambda}(p)=\mathds{1}(\vert p\vert\leq\Lambda)\,,
\]
is due to the fact that we work on the Hilbert space $\mathfrak H_\Lambda$, defined in \eqref{eq:HL}.

The first step in our proof consists in proving the existence of a unique local solution in $\YY$ for the Cauchy problem \eqref{eq:Q-Cauchy-problem}. Then, we show that the BDF energy functional is constant along this solution. From this property, together with its coercivity, we easily infer that the solution is in fact global in time.
\begin{rmk}
    Notice that the existence of a coercive conserved energy is a huge advantage of the Bogoliubov-Dirac-Fock theory when compared to the Dirac-Fock model or to non-linear Dirac equations. Indeed, to our knowledge, global well-posedness results for these other models have been proved only under suitable assumptions on the smallness of the initial data (with respect to an appropriate Sobolev norm). See, \emph{e.g.}, the papers \cite{BejHer-2014-CMP,BejHer-2015-CMP,BouCan-2015-IMRN,EscVeg-1997-JMA,MacNakOza-2003-RMI}. This is mainly due to the lack of an a priori energy estimate in these cases. 
\end{rmk}
Let us now define $F$ as follows
\begin{equation}\label{eq:F}
    F\left(Q\right)=\left[\DD_Q, Q\right]+\left[V_Q,P^0_-\right]\,.
\end{equation}
A local well-posedness result for \eqref{eq:Q-Cauchy-problem} reduces to the following lemma.
\begin{lem}\label{stat:F-lipschitz}
    Given $\Lambda>0$, $\nu\in C^1\left(\intco{0}{+\infty},\CC\right)$, the map $F$ in \eqref{eq:F} is well defined as $F:[0,+\infty)\times \YY\to \YY$. Moreover, it is  locally Lipschitz on $\YY$, uniformly on bounded time intervals. As a consequence, for any $\gamma_I$ such that $Q_I=\gamma_I-P^0_{-}\in\YY$, there exists a unique local solution $Q\in C^1\left(\intco{0}{T},\YY\right)$ of the Cauchy problem \eqref{eq:Q-Cauchy-problem}.
\end{lem}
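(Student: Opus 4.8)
The plan is to establish \cref{stat:F-lipschitz} by the standard Picard--Lindelöf (Cauchy--Lipschitz) scheme in the Banach space $\YY$ equipped with the norm \eqref{eq:norm}: once $F$ is shown to be well defined on $[0,+\infty)\times\YY$ and locally Lipschitz in $Q$ uniformly on bounded time intervals, the existence and uniqueness of a local solution $Q\in C^1(\intco{0}{T},\YY)$ follows from the abstract fixed-point theorem applied to the integral equation $Q(t)=Q_I-i\int_0^t F(s,Q(s))\,ds$. So the work is entirely in the two estimates on $F$.

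First I would treat the commutator $[\DD^0,Q]$ coming from the kinetic part of $\DD_Q$: since $\DD^0$ commutes with $P^0_\pm$, this term is block-diagonal-friendly and one checks directly that $\abs{\DD^0}^{1/2}[\DD^0,Q]\abs{\DD^0}^{1/2}$ has the right $\mathfrak S^1$/$\mathfrak S^2$ structure, using that $\DD^0$ acts as a bounded operator from $\mathfrak H_\Lambda$ into itself composed with $\abs{\DD^0}$ (the UV cut-off makes $\DD^0$ bounded, with norm $\lesssim (v_\mathrm F+g(1))\Lambda$, but the point is to keep one factor of $\abs{\DD^0}$ available to absorb into the Schatten norms). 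Then I would handle the potential part $V_Q$, splitting it into the direct term $(\rho_Q-\nu(t))\ast\abs{\cdot}^{-1}$ and the exchange term with kernel $Q(x,y)/\abs{x-y}$. For the direct term one uses that multiplication by a Coulomb potential $\phi\ast\abs{\cdot}^{-1}$ with $\phi\in\CC$ is bounded from $\mathfrak H_\Lambda$ to $\mathfrak H_\Lambda$ with norm controlled by $\norm{\phi}_\CC$ (via Kato's / Hardy's inequality in 2D adapted to the cut-off), and that $\rho_Q\in\CC$ with $\norm{\rho_Q}_\CC\lesssim\norm{Q}$ is built into the definition of $\YY$; the time-regularity of $\nu$ then gives continuity in $t$ and, since $\nu\in C^1$, the needed uniform control on bounded intervals. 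For the exchange term one uses that $Q\abs{\DD^0}^{1/2}\in\mathfrak S^2$ together with Kato's inequality (the 2D analogue $\abs{x}^{-1}\le C\abs{p}$, legitimate here precisely because of \eqref{eq:dp} and the cut-off), to control $\norm{Q(x,y)/\abs{x-y}}$ in operator norm and in the relevant Schatten norms by $\norm{Q}$ — this is exactly the place where the hypothesis $v_\mathrm F\ge v_\mathrm c=h^{-1}(2)$ is (implicitly) useful, as flagged in the excerpt. One also has to verify $\rho_{F(Q)}\in\CC$; since $F(Q)$ is a commutator of a bounded operator with an element of $\YY$, its density lies in $\CC$ with the appropriate bound, again using \cite[Lemma~4.5]{HaiLewSpa-2012-JMP}-type estimates.

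The Lipschitz bound then follows by writing $F(t,Q_1)-F(t,Q_2)$, expanding each commutator bilinearly, and observing that $F$ is a finite sum of terms each of which is bilinear or affine in $Q$ (the direct Coulomb term is quadratic in $Q$ through $\rho_Q$ appearing twice — once in $V_Q$ and once in the commutator $[\DD_Q,Q]$ — and the exchange term is likewise quadratic). Hence on any ball $\norm{Q_1},\norm{Q_2}\le R$ one gets $\norm{F(t,Q_1)-F(t,Q_2)}\le C(R,T)\norm{Q_1-Q_2}$ with $C$ uniform for $t\in[0,T]$; here the $T$-dependence enters only through $\sup_{[0,T]}\norm{\nu(t)}_\CC$, which is finite by \eqref{eq:n(t)}. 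Continuity of $t\mapsto F(t,Q)$ is immediate from continuity of $t\mapsto\nu(t)$ in $\CC$.

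The main obstacle I expect is the pair of estimates showing that the \emph{exchange term} $Q(x,y)/\abs{x-y}$, as an operator on $\mathfrak H_\Lambda$, maps $\YY$ into itself with a bound by $\norm{Q}$ — and, more delicately, that after taking the commutators with $Q$ and $P^0_-$ the resulting operator still has a finite $\mathfrak S^1$ norm for the diagonal block $\abs{\DD^0}^{1/2}(\cdot)^{++}\abs{\DD^0}^{1/2}$ and not merely $\mathfrak S^2$. This requires carefully tracking how the two available $\abs{\DD^0}^{1/2}$ factors (one from $Q\abs{\DD^0}^{1/2}\in\mathfrak S^2$ and one from $P^0_-$ being a bounded projection that nevertheless splits the operator into blocks) are distributed across the product, and using the 2D Kato inequality together with the boundedness of $\abs{\DD^0}^{1/2}$ composed with $\abs{p}^{-1/2}$ on the cut-off space. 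The density-in-$\CC$ condition in the definition of $\YY$, which (as the authors note) does not follow from the other conditions, is what makes the direct term well behaved and must be propagated by $F$; verifying this closure is the second point requiring care.
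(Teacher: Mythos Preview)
Your overall strategy is correct and matches the paper's: establish that $F$ is locally Lipschitz on $\YY$ uniformly on bounded time intervals, then invoke Picard--Lindel\"of. The decomposition you sketch is also essentially the one the paper uses, writing $F(Q_1)-F(Q_2)$ as a sum of commutators with the external-potential part and the self-consistent part $V'_Q$.

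However, your identification of the ``main obstacle'' misses the key simplification that the paper exploits. For the term $[V'_Q,P^0_-]$, the diagonal blocks vanish \emph{algebraically}: since $(P^0_-)^2=P^0_-$ and $P^0_+P^0_-=0$, one has $[V'_Q,P^0_-]^{++}=[V'_Q,P^0_-]^{--}=0$, so the $\mathfrak S^1$-norm of $\abs{\DD^0}^{1/2}\bigl([V'_Q,P^0_-]^{++}-[V'_Q,P^0_-]^{--}\bigr)\abs{\DD^0}^{1/2}$ is simply zero. There is no delicate distribution of $\abs{\DD^0}^{1/2}$ factors to track here. Moreover, the direct part of $[V'_Q,P^0_-]$ vanishes entirely because $\rho_Q\ast\abs{\cdot}^{-1}$ and $P^0_-$ are both Fourier multipliers and hence commute; only the exchange kernel survives, and its $\mathfrak S^2$-norm is bounded via the Hardy inequality and \eqref{eq:dp}.

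Two further points. First, the assumption $v_{\mathrm F}\ge v_{\mathrm c}$ is \emph{not} part of the hypothesis of this lemma and is not used in the paper's proof; the Hardy-type bound \eqref{eq:dp} holds for any $v_{\mathrm F}>0$, and the critical-velocity condition enters only later in the energy coercivity (\cref{stat:energy-boundedness}). Second, for the Coulomb-norm closure $\rho_{F(Q)}\in\CC$, the paper does not invoke abstract \cite[Lemma~4.5]{HaiLewSpa-2012-JMP}-type estimates but rather computes the densities explicitly in Fourier variables and observes that several contributions vanish identically: $\rho_{[\varphi,Q]}=0$ by a change of variables, and $\rho_{[V'_Q,\widetilde Q]}=0$ by the same argument together with the symmetry of the exchange kernel. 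The only nonzero densities are $\rho_{[D^0,Q]}$ and $\rho_{[R_Q,P^0_-]}$, both bounded directly by $\norm{\abs{\DD^0}^{1/2}Q}_{\mathfrak S^2}$ via Cauchy--Schwarz in momentum space.
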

\begin{proof}
Notice that the time dependence on the map $F$ is only due to the external electric potential 
\[
\nu=\nu(t)\,.
\]
We will omit to explicitly keep track of it, for simplicity.

Observe that 
\begin{multline*}
    F\left(Q_1\right)-F\left(Q_2\right)= \left[D_\varphi,Q_1-Q_2\right]+\left[V'_{Q_1-Q_2},P^0_-\right] \\
    + \left[V'_{Q_1}-V'_{Q_2}, Q_1\right]+\left[V'_{Q_2}, Q_1-Q_2\right]\,,
\end{multline*}
where 
\[
D_{\varphi(t)}:=P_\Lambda \left(v_\mathrm{F} D^0+\varphi(t) \right)P_\Lambda \,,\quad \varphi(t)=- \nu(t)\ast\vert\cdot\vert^{-1}\,,
\]
and
\begin{equation}\label{eq:V'Q}
V'_Q:=P_\Lambda\left(\rho_Q\ast\vert\cdot\vert^{-1} - \frac{Q\left(x,y\right)}{\abs{x-y}}\right)P_\Lambda
\end{equation}
We now proceed to estimate the terms in the r.h.s. above.
\smallskip

\noindent
{\sl Step I: estimates for $\left[D_\varphi,Q_1-Q_2\right]$.}

Notice that
\begin{multline}\label{eq:IS1}
    \norm{\abs{\DD^0}^{1/2}\left(\left[D_\varphi, Q\right]^{++}-\left[D_\varphi, Q\right]^{--}\right)\abs{\DD^0}^{1/2}}_{\mathfrak S^1\left(\HH_\Lambda\right)}\\
    \leq\norm{\abs{\DD^0}^{1/2}\left(\left(D_\varphi Q\right)^{++}-\left(D_\varphi Q\right)^{--}\right)\abs{\DD^0}^{1/2}}_{\mathfrak S^1\left(\HH_\Lambda\right)}\\
    +\norm{\abs{\DD^0}^{1/2}\left(\left(Q D_\varphi\right)^{++}-\left(Q D_\varphi\right)^{--}\right)\abs{\DD^0}^{1/2}}_{\mathfrak S^1\left(\HH_\Lambda\right)}
\end{multline}
and that it suffices to control the first term on the r.h.s.,
as the other estimates follow by similar arguments. First of all, notice that $D^0$ and $\varphi$ are both Fourier multipliers, as
\[
\widehat{D}_\varphi\left(p\right)=\mathbbm{1}\left(\vert p\vert\leq\Lambda\right)\left(v_{\mathrm{F}}\sigma\cdot p- \widehat{\nu}\left(p\right)\vert p\vert^{-1} \right) \mathbbm{1}\left(\vert p\vert\leq\Lambda\right)\,.
\]
So, as such, they commute with $P^0_\pm$ and $\abs{\DD^0}^{1/2}$. Then we have
\begin{multline}\label{eq:dvq}
    \norm{\abs{\DD^0}^{1/2}\left(\left(D_\varphi Q\right)^{++}-\left(D_\varphi Q\right)^{--}\right)\abs{\DD^0}^{1/2}}_{\mathfrak S^1}\\
    =\norm{D_\varphi\abs{\DD^0}^{1/2}\left(Q^{++}-Q^{--}\right)\abs{\DD^0}^{1/2}}_{\mathfrak S^1}\\
    \leq \norm{P_\Lambda \left(v_\mathrm{F} D^0\right) P_\Lambda}_{\mathfrak S^\infty}\norm{\abs{\DD^0}^{1/2}\left(Q^{++}-Q^{--}\right)\abs{\DD^0}^{1/2}}_{\mathfrak S^1} \\
    +\norm{P_\Lambda\varphi P_\Lambda}_{\mathfrak S^\infty}\norm{\abs{\DD^0}^{1/2}\left(Q^{++}-Q^{--}\right)\abs{\DD^0}^{1/2}}_{\mathfrak S^1} \,.
\end{multline}
Thanks to the ultraviolet cut-off, $\norm{P_\Lambda D^0 P_\Lambda}_{\mathfrak S^\infty}\leq C\left(\Lambda\right)$. Therefore, it suffices to estimate $\norm{P_\Lambda\varphi P_\Lambda}_{\mathfrak S^\infty}$. Taking $h\in\mathfrak H_\Lambda$ and by the Sobolev embedding, one gets
\[
\begin{split}
\norm{\nu\ast \vert\cdot\vert^{-1}h}_{\mathfrak H_{\Lambda}}&\leq
\norm{\nu\ast \vert\cdot\vert^{-1}}_{L^4}\norm{h}_{L^4} \\
&\leq C \norm{ \vert p\vert^{1/2}\nu\ast \vert\cdot\vert^{-1} }_{L^2}\norm{ \vert p\vert^{1/2}h}_{L^2} \\
&\leq C \Lambda^{1/2}\norm{\nu}_{\CC}\norm{ h}_{L^2} \\
&\leq C \Lambda^{1/2}\sup_{t\in[0,T]}\norm{\nu(t)}_{\CC}\norm{ h}_{L^2}\,,
\end{split}
\]
where the last inequality is due to the presence of the ultraviolet cut-off and to the observation that 
\[
\norm{ \vert p\vert^{1/2}\nu\ast \vert\cdot\vert^{-1}}_{L^2} =\norm{ \vert p\vert^{-1/2}\widehat{\nu} }_{L^2} =D\left(\nu,\nu\right)^{1/2}=\norm{ \nu}_{\CC}\,.
\]
Combining the above results, we get
\begin{multline*}
    \norm{ \abs{\DD^0}^{1/2}\left(\left(D_\varphi Q\right)^{++}-\left(D_\varphi Q\right)^{--}\right)\abs{\DD^0}^{1/2}}_{\mathfrak S^1}\\
    \leq C\left(\Lambda,\sup_{t\in[0,T]}\norm{\nu(t)}_{\CC}\right)\norm{\abs{\DD^0}^{1/2} \left(Q^{++}-Q^{--}\right)\abs{\DD^0}^{1/2}}_{\mathfrak S^1}\,,
\end{multline*}
and then, estimating the last term in \eqref{eq:IS1} in a similar way we find
\begin{equation}\label{eq:lip1s1}
    \norm{\abs{\DD^0}^{1/2}\left[D_\varphi, Q\right]^{++}\vert\DD^0\vert^{1/2}}_{\mathfrak S^1}\leq C\left(\Lambda,\sup_{t\in[0,T]}\norm{\nu(t)}_{\CC}\right)\norm{\abs{\DD^0}^{1/2} \left(Q^{++}-Q^{--}\right)\abs{\DD^0}^{1/2}}_{\mathfrak S^1}\,.
\end{equation}
Similar arguments yield
\begin{equation}\label{eqLlip2s2}
    \norm{\left[D_\varphi, Q\right]\abs{\DD^0}}_{\mathfrak S^2}\leq C\left(\Lambda,\sup_{t\in[0,T]}\norm{\nu(t)}\right)\norm{Q\abs{\DD^0}^{1/2}}_{\mathfrak S^2}\,.
\end{equation}
We now estimate the Coulomb norm. We have 
\begin{multline*}
    \frac{\widehat{\rho}_{\left[D^0,Q\right]}\left(k\right)}{\vert k\vert^{1/2}}= \frac{\vert k\vert^{1/2}}{2\pi \vert k\vert}\int_{\{\vert p\vert\leq \Lambda\}}\tr_{\C^2}\left(\widehat{D^0}\left(p+k/2\right)\widehat{Q}\left(p+k/2,p-k/2\right)\right. \\
    \left.-\widehat{Q}\left(p+k/2,p-k/2\right) \widehat{D^0}\left(p-k/2\right)\right)\,dp \\
    =\frac{\vert k\vert^{1/2}}{2\pi}\int_{\{\vert p\vert\leq\Lambda\}}\tr_{\C^2}\left(\frac{\sigma\cdot k}{\vert k\vert}\widehat{Q}\left(p+k/2,p-k/2\right)\right)\,dp\\
\,
\end{multline*}
and using the Cauchy-Schwarz inequality first for matrices and then for functions, one finds
\[
\frac{\widehat{\rho}_{\left[D^0,Q\right]}\left(k\right)}{\vert k\vert^{1/2}}\leq C\left(\Lambda\right) \vert k\vert^{1/2}\left( \int_{\{\vert p\vert\leq\Lambda\}}\,\vert\widehat{Q}\vert^2\left(p+k/2,p-k/2\right)\,dp \right)^{1/2}\,.
\]
Taking the square and integrating over $k$ gives
\[
\norm{\rho_{\left[D^0,Q\right]}}_{\CC}\leq C_1\left(\Lambda\right) \norm{\vert p\vert^{1/2} Q }_{\mathfrak S^2}\leq C_2\left(\Lambda\right) \norm{\abs{\DD^0}^{1/2} Q}_{\mathfrak S^2}
\]
Notice that the density corresponding to $\left[\varphi,Q\right]$ vanishes, namely
\begin{multline}\label{eq:phidzero}
    \widehat{\rho}_{\left[\varphi,Q\right]}\left(k\right)=\iint_{\R^2\times\R^2}\widehat\varphi_Q\left(p+k/2-s\right) \tr_{\C^2}\left(\widehat{Q}\left(s,p-k/2\right)\right)\,dpds \\
    -\iint_{\R^2\times\R^2}\widehat\varphi_Q\left(s-p+k/2\right) \tr_{\C^2}\left(\widehat{Q}\left(p+k/2,s\right)\right)\,dpds=0\,,
\end{multline}
as it can be seen by a change of variables. 
\smallskip

\noindent
{\sl Step II: estimates for $\left[V'_{Q_1-Q_2},P^0_-\right]$.}

We need to deal with a term of the form $\left[V'_Q,P^0_-\right]$. The commutator simplifies to
\begin{equation}\label{eq:commsimpl}
\left[V'_Q,P^0_-\right]=-\left[P_\Lambda \frac{Q\left(x,y\right)}{\abs{x-y}} P_\Lambda, P^0_-\right]\,,
\end{equation}
since $\left[P_\Lambda \rho_Q\ast\vert\cdot\vert^{-1}P_\Lambda,P^0_-\right]=0$, the involved operators being Fourier multipliers. Moreover, observe that
\[
\begin{cases}
    \left[V'_Q,P^0_-\right]^{++}=P^0_+ V'_Q P^0_-P^0_+-P^0_+P^0_- V'_QP^0_+ =0 \\[6pt]
    \left[V'_Q,P^0_-\right]^{--}=P^0_- V'_Q P^0_- -P^0_-P^0_- V'_QP^0_- =0
\end{cases}
\]
using $(P^0_-)^2=P^0_-$ and $P^0_- P^0_+ = P^0_+ P^0_- =0$. Thus,
\[
 \norm{\abs{\DD^0}^{1/2}\left(\left[V'_Q,P^0_-\right]^{++}-\left[V'_Q,P^0_-\right]^{--} \right)\abs{\DD^0}^{1/2}}_{\mathfrak S^1}=0\,.
\]
By \eqref{eq:commsimpl}, it is true that
\[
\norm{\left[V'_Q,P^0_-\right] \abs{\DD^0}}_{\mathfrak{S}^2}\leq 2 \norm{\DD^0}_{\mathfrak S^\infty}\norm{\frac{Q\left(x,y\right)}{\abs{x-y}}}_{\mathfrak{S}^2\left(\HH_\Lambda\right)}\leq 2 C\left(\Lambda\right)\,\norm{\frac{Q\left(x,y\right)}{\abs{x-y}}}_{\mathfrak{S}^2\left(\HH_\Lambda\right)}\,,
\]
and the Hardy inequality gives
\begin{equation}\label{eq:RQ}
\begin{split}
\norm{\frac{Q\left(x,y\right)}{\abs{x-y}}}^2_{\mathfrak{S}^2\left(\HH_\Lambda\right)}&=\iint\frac{\vert Q\left(x,y\right)\vert^2}{\abs{x-y}^2}\,dxdy \\
&\leq \tr\left(\vert p\vert^2 Q^2\right)\leq \Lambda \tr\left(\vert p\vert Q^2\right) \\
&\leq \frac{\Gamma\left(1/4\right)^2}{4\Gamma\left(3/4\right)^2\left(v_{\mathrm F}+g\left(1\right)\right)}\tr\left(\abs{\DD^0}Q^2\right) \\
&=\frac{\Gamma\left(1/4\right)^2}{4\Gamma\left(3/4\right)^2\left(v_{\mathrm F}+g\left(1\right)\right)}\norm{\abs{\DD^0}^{1/2}Q}_{\mathfrak S^2}\,,
\end{split}
\end{equation}
thanks to the ultraviolet cut-off and to the inequality \eqref{eq:dp}. In order to deal with the estimate for the Coulomb norm, observe that 
\[
\left[V'_Q,P^0_-\right]=\left[P_\Lambda\left(\varphi_Q+R_Q\right)P_\Lambda,P^0_-\right]\,,
\]
where $\varphi_Q= \rho_Q\ast\vert\cdot\vert^{-1}$ and $R_Q$ is the operator with integral kernel $-Q\left(x,y\right)/\abs{x-y}$. Then,
\[
\tr\left(\left[V'_Q,P^0_-\right]\right)=\underbrace{\tr\left(\left[P_\Lambda\varphi_QP_\Lambda,P^0_-\right]\right)}_{=0}+\tr\left(\left[P_\Lambda R_QP_\Lambda,P^0_-\right]\right)=\tr\left(\left[P_\Lambda R_QP_\Lambda,P^0_-\right]\right)\,,
\]
since $\tr_{\C^2}\left(\widehat{P}^0_-\left(p\right)\right)=0$, as Pauli matrices have vanishing trace. We have, using twice the Cauchy-Schwarz inequality for matrices and once for functions,
\begin{equation}\label{eq:RQdens}
    \begin{split}
    \frac{\widehat{\rho}_{\left[P_\Lambda R_QP_\Lambda,P^0_-\right]}\left(k\right)}{\vert k\vert^{1/2}}&=\frac{1}{\vert k\vert^{1/2}}\int_{\{\vert p\vert\leq\Lambda\}} \tr_{\C^2}\left(\left(\widehat P^0_-\left(p+k/2\right)-\widehat P^0_-\left(p-k/2\right) \right)R_Q\left(p+k/2, p-k/2\right)\right)\, dp \\
    &\leq \frac{1}{\vert k\vert^{1/2}}\left(\int_{\{\vert p\vert\leq\Lambda\}}\tr_{\C^2}\left(\left(\widehat P^0_-\left(p+k/2\right)-\widehat P^0_-\left(p-k/2\right)\right)^2 \right) \,dp\right)^{1/2} \\
    & \times \left(\int_{\{\vert p\vert\leq\Lambda\}}\abs{R_Q\left(p+k/2, p-k/2\right)}^2  \,dp\right)^{1/2} \\
    &\leq \frac{C\left(\Lambda\right)}{\vert k\vert^{1/2}}\norm{R_Q}_{\mathfrak S^2} \leq \frac{C\left(\Lambda\right)}{\vert k\vert^{1/2}} \norm{Q\abs{\DD^0}^{1/2}}_{\mathfrak S^2}
   \end{split}
\end{equation}
by \eqref{eq:RQ}. Notice that we have used the fact that
\[
\tr_{\C^2}\left(\left(\widehat P^0_-\left(p+k/2\right)-\widehat P^0_-\left(p-k/2\right)\right)^2\right)=2\tr_{\C^2}\left(\widehat P^0_+\left(p+k/2\right)\widehat P^0_-\left(p-k/2\right) \right)\leq 4\,.
\]
Integrating over $k$ in \eqref{eq:RQdens}, we find
\[
\norm{\widehat{\rho}_{\left[P_\Lambda R_QP_\Lambda,P^0_-\right]}}_{\CC}\leq C\left(\Lambda\right)\norm{Q\abs{\DD^0}^{1/2}}_{\mathfrak S^2}
\]
\smallskip

\noindent
{\sl Step III: estimates for $\left[V'_{Q_1}-V'_{Q_2},Q_1\right]$ and $\left[V'_{Q_2},Q_1-Q_2\right]$.}

We consider a term of the schematic form $\left[V'_Q,\widetilde Q\right]$ and look for an estimate of the form
\[
\norm{\left[V'_Q,\widetilde Q\right]}\leq C\norm{Q}\,\norm{\widetilde Q}\,.
\]
In order to bound
\[
\norm{\left(\left[V'_Q,\widetilde Q\right]^{++}-\left[V'_Q,\widetilde Q\right]^{--}\right)\abs{\DD^0}}_{\mathfrak S_1}
\]
it suffices to deal with
\[
\norm{\left(\left(V'_Q\widetilde Q\right)^{++}-\left(V'_Q\widetilde Q\right)^{--}\right)\abs{\DD^0}}_{\mathfrak S^1}\,,
\]
as the arguments for the term $\norm{\left(\left(\widetilde QV'_Q\right)^{++}-\left(\widetilde QV'_Q\right)^{--}\right)\abs{\DD^0}}_{\mathfrak S^1}$ are similar.\par
We start rewriting
\begin{multline}\label{eq:IIIs1}
    \norm{\left(\left(V'_Q\widetilde Q\right)^{++}-\left(V'_Q\widetilde Q\right)^{--}\right)\abs{\DD^0}}_{\mathfrak S^1}\\
    \leq \norm{\left(\left( \left( \rho_{Q}\ast\vert \cdot\vert^{-1}\right)\widetilde Q\right)^{++}  -\left( \left( \rho_{Q}\ast\vert \cdot\vert^{-1}\right)\widetilde Q \right)^{--} \right)\abs{\DD^0}}_{\mathfrak S^1} \\ 
    +\norm{\left(\left( R_Q\widetilde Q\right)^{++}  -\left( R_Q\widetilde Q \right)^{--} \right)\abs{\DD^0}}_{\mathfrak S^1},,
\end{multline}
where $R_Q$ is the operator with integral kernel $Q(x,y)/\vert x-y\vert$.
The first term on the r.h.s. can be bounded as follows
\begin{multline}
    \norm{\left(\left( \left( \rho_{Q}\ast\vert \cdot\vert^{-1}\right)\widetilde Q\right)^{++}  -\left( \left( \rho_{Q}\ast\vert \cdot\vert^{-1}\right)\widetilde Q \right)^{--} \right)\abs{\DD^0}}_{\mathfrak S^1}\\
    = \norm{ \left( \rho_{Q}\ast\vert \cdot\vert^{-1}\right)  \left(Q^{++}-Q^{--}\right) \abs{\DD^0}}_{\mathfrak S^1} \\
    \leq \norm{P_\Lambda \left( \rho_{Q}\ast\vert \cdot\vert^{-1}\right)P_\Lambda}_{\mathfrak S^\infty}  \norm{\left(Q^{++}-Q^{--}\right) \abs{\DD^0}}_{\mathfrak S^1}\,.
\end{multline}
The argument in \emph{Step I} shows that 
\begin{equation}\label{eq:rhohs}
    \begin{split}
    \norm{P_\Lambda \left( \rho_{Q}\ast\vert \cdot\vert^{-1}\right)P_\Lambda}_{\mathfrak S^\infty}&\leq C\norm{\rho_Q}_{\CC} 
    \end{split}\,,
\end{equation}
so that
\[
\norm{\left(\left( \left( \rho_{Q}\ast\vert \cdot\vert^{-1}\right)\widetilde Q\right)^{++}  -\left( \left( \rho_{Q}\ast\vert \cdot\vert^{-1}\right)\widetilde Q \right)^{--} \right)\abs{\DD^0}}_{\mathfrak S^1}\leq C\norm{Q}\,\norm{\widetilde{Q}}\,.
\]
For the other term in \eqref{eq:IIIs1}, we have
\[
\begin{split}
    \norm{\left(\left( R_Q\widetilde Q\right)^{++}  -\left( R_Q \widetilde Q \right)^{--} \right)\abs{\DD^0}}_{\mathfrak S^1}&\leq \norm{\left(R_Q\widetilde Q\right)^{++} \abs{\DD^0}}_{\mathfrak S_1}+\norm{\left(R_Q\widetilde Q\right)^{--} \abs{\DD^0}}_{\mathfrak S_1} \\
    &\leq 2\norm{\frac{Q\left(x,y\right)}{\abs{x-y}}}_{\mathfrak S^2\left(\HH_\Lambda\right)}\,\norm{\widetilde{Q}\abs{\DD^0}^{1/2}}_{\mathfrak S^2}\,\norm{\abs{\DD^0}^{1/2}}_{\mathfrak S^\infty} \\
    &\leq C \Lambda^{1/2}\norm{Q\abs{\DD^0}^{1/2}}_{\mathfrak S^2} \norm{\widetilde{Q}\abs{\DD^0}^{1/2}}_{\mathfrak S^2}\\
    &\leq C \Lambda^{1/2}\norm{Q}\,\norm{\widetilde{Q}}\,,
\end{split}
\]
by \eqref{eq:RQ} and thanks to the ultraviolet cut-off.

In order to estimate the norm $\norm{\left[V'_{Q},\widetilde{Q}\right] \abs{\DD^0}^{1/2}}_{\mathfrak S^2}$, it suffices to consider the term $\norm{V'_{Q}\widetilde{Q} \abs{\DD^0}^{1/2}}_{\mathfrak S^2}$. We have, by \eqref{eq:rhohs} and \eqref{eq:RQ},
\[
\begin{split}
    \norm{V'_{Q}\widetilde{Q} \abs{\DD^0}^{1/2}}_{\mathfrak S^2}&\leq \norm{V'_Q}_{\mathfrak S^\infty}\,\norm{\widetilde{Q}\abs{\DD^0}^{1/2}}_{\mathfrak{S}^2} \\
    &\leq\left(\norm{\rho_Q\ast\vert\cdot\vert^{-1}}_{\mathfrak S^\infty\left(\mathfrak{H}_\Lambda\right)}+\norm{\frac{Q\left(x,y\right)}{\abs{x-y}}}_{\mathfrak S^\infty\left(\mathfrak{H}_\Lambda\right)} \right)\, \norm{\widetilde{Q}\abs{\DD^0}^{1/2}}_{\mathfrak{S}^2}\\
    &\leq\left(\norm{\rho_Q\ast\vert\cdot\vert^{-1}}_{\mathfrak S^\infty\left(\mathfrak{H}_\Lambda\right)}+\norm{\frac{Q\left(x,y\right)}{\abs{x-y}}}_{\mathfrak S^2\left(\mathfrak{H}_\Lambda\right)} \right)\, \norm{\widetilde{Q}\abs{\DD^0}^{1/2}}_{\mathfrak{S}^2} \\
    &\leq C  \norm{Q \abs{\DD^0}^{1/2}}_{\mathfrak S^2}\,\norm{\widetilde{Q} \abs{\DD^0}^{1/2}}_{\mathfrak S^2} \\
    &\leq C \norm{Q}\,\norm{\widetilde{Q}}\,.
\end{split}
\]
Recalling the definition of $V'_Q$ in \eqref{eq:V'Q}, it is not hard to see that the density corresponding to $\left[V'_Q,Q\right]$ vanishes. Indeed, it is true that $\rho_{\left[\varphi_Q,Q\right]}=0$, as it can be seen arguing as for \eqref{eq:phidzero}. Moreover, the density corresponding to $\left[R_Q,Q\right]$ also vanishes, thanks to the symmetry in $Q$.
\medskip

The claim follows combining \emph{Step I-III}.
\end{proof}
Given the previous Lemma, let us consider the maximal solution to \eqref{eq:Q-Cauchy-problem}
\[
Q\in C^1([0,T_{\mathrm{\max}}, \mathcal Y)
\]
provided by \cref{stat:F-lipschitz}. The next result shows that $\gamma\left(t\right)=Q\left(t\right)+P^0_{-}$ is a projector for all $t\in\intco{0}{T_\mathrm{max}}$, ensuring that
\begin{equation}\label{eq:+Q-}
-P^0_{-}\leq Q\left(t\right)\leq P^0_{+}\,.
\end{equation} This property will be crucial later on.
\begin{lem}\label{stat:P-projection}
    Let $\Lambda>0$ and $Q_I=\gamma_I-P^0_{-}\in\YY$. Let $\nu\in C^1\left(\intco{0}{+\infty},\CC\right)$ and $Q\in C^1\left(\intco{0}{T_\mathrm{max}},\YY\right)$ be the maximal solution given by \cref{stat:F-lipschitz}. Then $\gamma\left(t\right)=Q\left(t\right)+P^0_{-}$ is an orthogonal projection, for any $t\in\intco{0}{T_\mathrm{max}}$.
\end{lem}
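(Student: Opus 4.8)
The plan is to exploit the Hamiltonian (Liouville) structure of the first equation in \eqref{eq:Q-Cauchy-problem}: along the maximal solution $Q\in C^1\left(\intco{0}{T_\mathrm{max}},\YY\right)$ the self-consistent operator $\DD(t):=\DD_{Q(t)}$ is a \emph{given} time-dependent coefficient, and $\gamma(t)=P^0_-+Q(t)$ solves the linear equation $i\tfrac{d}{dt}\gamma(t)=[\DD(t),\gamma(t)]$ with $\gamma(0)=\gamma_I$. Such a flow is implemented by conjugation with a unitary propagator, and conjugation by a unitary preserves the class of orthogonal projections.

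First I would record the two facts that make this work: that $\DD(t)$ is a \emph{bounded} self-adjoint operator on $\HH_\Lambda$, and that $t\mapsto\DD(t)$ is norm-continuous on $\intco{0}{T_\mathrm{max}}$. Boundedness is a by-product of the estimates in the proof of \cref{stat:F-lipschitz}: the cut-off makes $v_\mathrm{F}D^0$ bounded, the bounds $\norm{P_\Lambda(\rho_Q*\abs{\cdot}^{-1})P_\Lambda}_{\mathfrak S^\infty}\leq C\norm{\rho_Q}_\CC$ of \eqref{eq:rhohs} and $\norm{P_\Lambda\varphi P_\Lambda}_{\mathfrak S^\infty}\leq C\Lambda^{1/2}\norm{\nu(t)}_\CC$ control the Coulomb and external parts, and \eqref{eq:RQ} (together with the fixed contribution of $\gamma^0_{\mathrm{ren}}$, a bounded Fourier multiplier) controls the exchange part, so that $\norm{\DD(t)}_{\mathfrak S^\infty}\leq C\bigl(\Lambda,\sup_{[0,T]}\norm{\nu}_\CC\bigr)\bigl(1+\norm{Q(t)}_\YY\bigr)$ on $[0,T]\subset\intco{0}{T_\mathrm{max}}$. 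Norm-continuity of $t\mapsto\DD(t)$ then follows from $Q\in C^1\left(\intco{0}{T_\mathrm{max}},\YY\right)$, $\nu\in C^1\left(\intco{0}{+\infty},\CC\right)$ and the same Lipschitz-type estimates. Note that, as in \cref{stat:F-lipschitz}, this step does not use any lower bound on $v_\mathrm{F}$.

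Given this, I would introduce the propagator $U(t)\in\B(\HH_\Lambda)$ solving $i\tfrac{d}{dt}U(t)=\DD(t)U(t)$, $U(0)=I$: since $t\mapsto\DD(t)$ is bounded and norm-continuous, this linear problem has a unique solution $U\in C^1\left(\intco{0}{T_\mathrm{max}},\B(\HH_\Lambda)\right)$ (Picard iteration, or a Dyson series, in $\B(\HH_\Lambda)$), and $\tfrac{d}{dt}\bigl(U(t)^*U(t)\bigr)=0$ together with $\DD(t)^*=\DD(t)$ gives $U(t)^*U(t)=U(t)U(t)^*=I$, i.e.\ $U(t)$ is unitary. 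One then checks directly that $\widetilde\gamma(t):=U(t)\gamma_IU(t)^*$ satisfies $i\tfrac{d}{dt}\widetilde\gamma(t)=[\DD(t),\widetilde\gamma(t)]$ with $\widetilde\gamma(0)=\gamma_I$. Both $\gamma$ and $\widetilde\gamma$ thus solve, in $\B(\HH_\Lambda)$, the \emph{linear} Cauchy problem $i\dot A=[\DD(t),A]$, $A(0)=\gamma_I$, whose generator $A\mapsto[\DD(t),A]$ is a bounded operator on $\B(\HH_\Lambda)$ depending norm-continuously on $t$; uniqueness for this problem — a one-line Gr\"onwall estimate bounding $\norm{\gamma(t)-\widetilde\gamma(t)}_{\mathfrak S^\infty}$ by $\norm{\gamma_I-\gamma_I}_{\mathfrak S^\infty}\exp\bigl(2\int_0^t\norm{\DD(s)}_{\mathfrak S^\infty}\,ds\bigr)=0$ — forces $\gamma(t)=U(t)\gamma_IU(t)^*$. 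Since $\gamma_I=\gamma_I^*=\gamma_I^2$ (the Cauchy datum being an orthogonal projection) and $U(t)$ is unitary, $\gamma(t)=\gamma(t)^*=\gamma(t)^2$ for every $t\in\intco{0}{T_\mathrm{max}}$, and the two-sided bound \eqref{eq:+Q-} follows at once from $0\leq\gamma(t)\leq I$. Equivalently, one may skip $U(t)$ and argue directly that $R(t):=\gamma(t)^2-\gamma(t)$ satisfies $i\dot R=[\DD(t),R]$ with $R(0)=0$, so that the same Gr\"onwall estimate yields $R\equiv0$ (self-adjointness of $\gamma(t)$ being already built into $\YY$).

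The only point that requires genuine care — more a matter of bookkeeping than a real obstacle — is the first step: verifying that $\DD(t)$ is bounded on $\HH_\Lambda$ with operator norm locally bounded and continuous in $t$ along the known solution. Once this is in hand the rest is soft, being standard linear non-autonomous ODE theory in the Banach space $\B(\HH_\Lambda)$ combined with the elementary fact that unitary conjugation preserves orthogonal projections.
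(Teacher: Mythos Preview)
Your proof is correct. The paper takes exactly the short route you mention at the end: it sets $A(t)=\gamma(t)^2-\gamma(t)$, observes that $i\dot A=[\DD_{Q(t)},A]$ with $A(0)=0$, and concludes $A\equiv0$ by uniqueness for this linear Cauchy problem in $\B(\HH_\Lambda)$ (using $\DD_{Q(t)}\in C^1\left(\intco{0}{T_\mathrm{max}},\B(\HH_\Lambda)\right)$, which follows from the estimates you quote). Your primary presentation via the unitary propagator $U(t)$ is a mild repackaging of the same argument --- the underlying input (boundedness and norm-continuity of $\DD(t)$) and the underlying mechanism (uniqueness for a bounded linear non-autonomous ODE) are identical --- but it has the advantage of yielding the explicit formula $\gamma(t)=U(t)\gamma_I U(t)^*$, which is occasionally useful and makes the preservation of the projection property transparent.
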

\begin{proof}
    Let us define $A\left(t\right)=\gamma\left(t\right)^2-\gamma\left(t\right)$. A direct calculation shows, together with \eqref{eq:main-cauchy-problem}, that it satisfies
    \begin{equation}\label{eq:projection-system}
        \begin{cases}
            i\frac{d}{dt}A\left(t\right)=\left[\mathcal{D}_{Q\left(t\right)}, A\left(t\right)\right],\\[6pt]
            A\left(0\right)=\gamma^2_{I}-\gamma_I=0.
    \end{cases}
    \end{equation}
    Since $\mathcal{D}_{Q\left(t\right)}\in C^1\left(\intco{0}{T_{\mathrm{max}}},\B\left(\HH_\Lambda\right)\right)$, the linear Cauchy problem \eqref{eq:projection-system} has a unique maximal solution in $\B\left(\HH_\Lambda\right)$, obviously given by $A\left(t\right)=0$ for any $t\in\intco{0}{T_{\mathrm{max}}}$. This implies that $\gamma\left(t\right)^2=\gamma\left(t\right)$, for any $t\in\intco{0}{T_{\mathrm{max}}}$, and therefore $\gamma\left(t\right)$ is an orthogonal projection along the trajectory.
\end{proof}
We can now prove that the BDF energy \eqref{eq:bdfen} stays bounded on bounded time intervals, along the maximal solution $Q\in C^1\left(\intco{0}{T_\mathrm{max}},\YY\right)$ of \eqref{eq:Q-Cauchy-problem}. This is enough to show that the maximal time of existence is $T_\mathrm{max}=+\infty$.
\begin{lem}\label{stat:energy-boundedness}
    Let $\Lambda>0$, $Q_I=\gamma_I-P^0_{-}\in\YY$ and $\nu\in C^1\left(\intco{0}{+\infty},\CC\right)$ and $Q\in C^1\left(\intco{0}{T},\YY\right)$ be a local solution given by \cref{stat:F-lipschitz}. If $v_\mathrm{F}$ satisfies
    \begin{equation*}
        v_\mathrm{F} \geq v_\mathrm{c}=h^{-1}\left(2\right),
    \end{equation*}
    then there exists $A_{T,\nu}>0$ (depending on $T$ and $\nu$) such that 
    \[
    \E^{\varphi}_\mathrm{BDF}\left(Q\left(t\right)\right)\leq A_{T,\nu} e^{t},\quad \forall t\in[0,T)\,.
    \]
\end{lem}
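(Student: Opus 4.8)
The plan is to show that $t\mapsto\E^\varphi_{\mathrm{BDF}}(Q(t))$ is of class $C^1$, to compute its time derivative — which I expect to be governed solely by the explicit time–dependence of the external potential — and then to close the estimate by Grönwall's lemma, using the coercivity of $\E^0_{\mathrm{BDF}}$ to bound $\norm{\rho_{Q(t)}}_{\CC}$ by the energy itself. By \cref{stat:P-projection}, $\gamma(t)=Q(t)+P^0_-$ is an orthogonal projection on $[0,T)$, so \eqref{eq:+Q-} holds and, in particular, $Q(t)^{++}-Q(t)^{--}\geq Q(t)^2$; combined with $Q(t)\in\YY$ and $\nu(t)\in\CC$, the discussion preceding \cref{lem:freebdf} guarantees that $\rho_{Q(t)}\in\CC$ and that all four terms of \eqref{eq:bdfen} are finite, with $C^1$ dependence on $t$ because $Q\in C^1([0,T);\YY)$, $\nu\in C^1([0,+\infty);\CC)$ and each term depends continuously and (multi)linearly on $Q$, $\rho_Q$ and $\nu$ in the relevant norms (compare the estimates in the proof of \cref{stat:F-lipschitz} and \cite{HaiLewSpa-2012-JMP}).

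I would then differentiate the four terms of \eqref{eq:bdfen} and substitute the evolution equation written in the form $i\,\dot Q(t)=[\DD_{Q(t)},Q(t)]+[V_{Q(t)},P^0_-]=[\DD_{Q(t)},\gamma(t)]$. All the contributions that do not involve $\dot\varphi(t)$ should assemble, after using cyclicity of the trace, into $\tfrac1i\tr\big(\DD_{Q(t)}[\DD_{Q(t)},\gamma(t)]\big)=0$ — the usual conservation of the energy along a Hamiltonian flow. Since $\varphi(t)=-\nu(t)\ast|\cdot|^{-1}$, the only surviving term is $\int_{\R^2}\dot\varphi(t,x)\rho_{Q(t)}(x)\,dx=-D\big(\dot\nu(t),\rho_{Q(t)}\big)$, so I expect the identity
\[
\frac{d}{dt}\E^\varphi_{\mathrm{BDF}}\big(Q(t)\big)=-D\big(\dot\nu(t),\rho_{Q(t)}\big),\qquad t\in[0,T).
\]
This is the main obstacle: $\DD_Q$ is unbounded and $\tr(\DD^0 Q)$ is only the \emph{regularized} trace \eqref{eq:trace}, so the cancellation must be justified through careful Schatten-class bookkeeping, exactly as in the massive three-dimensional analogue \cite{HaiLewSpa-2005-LMP,Mor-2025-JDE} and as in \cite{HaiLewSpa-2012-JMP}, crucially relying on $Q(t)\in\YY$ and on the projector property of \cref{stat:P-projection}.

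It then remains to run a coercivity-plus-Grönwall argument. For $-P^0_-\leq Q\leq P^0_+$ one has $\tr(\DD^0Q)=\tr\big(|\DD^0|^{1/2}(Q^{++}-Q^{--})|\DD^0|^{1/2}\big)\geq\tr(|\DD^0|Q^2)\geq0$, and by Kato's inequality the hypothesis $v_\mathrm{F}\geq v_\mathrm{c}=h^{-1}(2)$ — which enters the proof only here — makes the kinetic term dominate the exchange term, so that $\E^0_{\mathrm{BDF}}(Q)\geq\tfrac12 D(\rho_Q,\rho_Q)=\tfrac12\norm{\rho_Q}_{\CC}^2$; this is precisely the mechanism behind the positivity asserted in \cref{lem:freebdf} and the definition \eqref{eq:h-function} of $v_\mathrm{c}$. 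Writing $N_T:=\sup_{[0,T]}\norm{\nu(t)}_{\CC}$ and $M_T:=\sup_{[0,T]}\norm{\dot\nu(t)}_{\CC}$, both finite by \eqref{eq:n(t)}, and using $\E^0_{\mathrm{BDF}}(Q(t))=\E^\varphi_{\mathrm{BDF}}(Q(t))+D(\rho_{Q(t)},\nu(t))$ together with Cauchy--Schwarz in the Coulomb space and Young's inequality, one gets $\norm{\rho_{Q(t)}}_{\CC}^2\leq 4\,\E^\varphi_{\mathrm{BDF}}(Q(t))+4N_T^2=:4\,g(t)$, where $g(t)\geq0$ by \eqref{eq:energy-lower-bound}. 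Plugging this into the identity above yields $|g'(t)|\leq\norm{\dot\nu(t)}_{\CC}\norm{\rho_{Q(t)}}_{\CC}\leq 2M_T\sqrt{g(t)}\leq M_T\big(1+g(t)\big)$, whence $1+g(t)\leq(1+g(0))e^{M_T t}$ by Grönwall's lemma. Since $g(0)=\E^\varphi_{\mathrm{BDF}}(Q_I)+N_T^2$ is finite and $e^{M_T t}$ is bounded on $[0,T)$ by a constant (depending on $T$ and $M_T$) times $e^{t}$, we conclude that $\E^\varphi_{\mathrm{BDF}}(Q(t))\leq g(t)\leq A_{T,\nu}\,e^{t}$ for a suitable $A_{T,\nu}>0$ depending on $T$ and on $\nu$ (and on the fixed initial datum).
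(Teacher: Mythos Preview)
Your proposal is correct and follows essentially the same route as the paper: compute the time derivative of the BDF energy, show that all terms cancel except $-D(\dot\nu(t),\rho_{Q(t)})$, then use coercivity (via the hypothesis $v_\mathrm{F}\geq v_\mathrm{c}$ and Kato's inequality) to control $\norm{\rho_{Q(t)}}_\CC$ by the energy and close by Grönwall. The only cosmetic differences are that the paper uses the auxiliary function $\mathcal G(t)=\E^\varphi_{\mathrm{BDF}}(Q(t))+\tfrac12\norm{\nu(t)}_\CC^2$ (with the time-dependent shift, which yields the $e^t$ factor directly from $\mathcal G'\leq\tfrac12\norm{\dot\nu}_\CC^2+\mathcal G$), whereas you shift by the constant $N_T^2$ and obtain $e^{M_T t}$ before absorbing the $T$-dependence into $A_{T,\nu}$; also, note that $\DD_{Q(t)}$ is in fact bounded on $\HH_\Lambda$ thanks to the cutoff --- the regularized trace \eqref{eq:trace} is needed not because of unboundedness of $\DD^0$ but because $Q$ is not trace class, and the paper carries out the Schatten bookkeeping you allude to by splitting $\tr(\DD_{Q}\dot Q)$ into pieces and checking each vanishing separately.
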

\begin{proof}
    We start remarking that $\dot{Q}$ is in $\YY$, since $\dot{Q}=-iF\left(Q\right)$ by \eqref{eq:Q-Cauchy-problem} and $F$ is well defined on $\YY$ by \cref{stat:F-lipschitz}. This implies that
    \[
    \frac{d}{dt}\tr\left(Q\left(t\right)\right)=\tr\left(\dot{Q}\left(t\right)\right).
    \]
    Therefore,
    \[
    \begin{split}
        \frac{d}{dt}\E^{\varphi}_\mathrm{BDF}\left(Q\left(t\right)\right)&=\tr\left(\mathcal{D}^0\dot{Q}\left(t\right)\right)+ D\left(\rho_{Q\left(t\right)}-\nu\left(t\right),\rho_{\dot{Q}\left(t\right)}\right)-\tr\left(\frac{Q\left(x,y\right)}{\abs{x-y}}\dot{Q}\left(t\right)\right)-D\left(\dot{\nu}\left(t\right),\rho_{Q\left(t\right)}\right)\\
        &=\tr\left(\mathcal{D}_{Q\left(t\right)}\dot{Q}\left(t\right)\right)-D\left(\dot{\nu}\left(t\right),\rho_{Q\left(t\right)}\right),
    \end{split}
    \]
    by \cite[Lemma~5]{HaiLewSer-2005-CMP} which can be easily adapted to our case. Replacing $\dot{Q}$ by means of \eqref{eq:Q-Cauchy-problem}, we get
    \[
    \frac{d}{dt}\E^{\varphi}_\mathrm{BDF}\left(Q\left(t\right)\right)=-i\tr\left(\mathcal{D}_{Q\left(t\right)}\left[\mathcal{D}_{Q\left(t\right)},Q\left(t\right)\right]\right)-i\tr\left(\mathcal{D}_{Q\left(t\right)}\left[V_{Q\left(t\right)},P^0_{-}\right]\right)-D\left(\dot{\nu}\left(t\right),\rho_{Q\left(t\right)}\right).
    \]
    The second term on the right-hand side can be rewritten as follows
    \[
    \tr\left(\mathcal{D}_{Q\left(t\right)}\left[V_{Q\left(t\right)},P^0_{-}\right]\right)=\tr\left(\mathcal{D}^0\left[V_{Q\left(t\right)},P^0_{-}\right]\right)+\tr\left(V_{Q\left(t\right)}\left[V_{Q\left(t\right)},P^0_{-}\right]\right).
    \]
    In particular, the first term on the right-hand side above vanishes. Indeed,
    \[
    \tr\left(\mathcal{D}^0\left[V_{Q\left(t\right)},P^0_{-}\right]\right)=\tr\left(\abs{\mathcal{D}^0}^{1/2}\left(\left[V_{Q\left(t\right)},P^0_{-}\right]^{++}-\left[V_{Q\left(t\right)},P^0_{-}\right]^{--}\right)\abs{\mathcal{D}^0}^{1/2}\right)
    \]
    and
    \begin{multline*}
        \tr\left(\abs{\mathcal{D}^0}^{1/2}P^0_{-}\left[V_{Q\left(t\right)},P^0_{-}\right]P^0_{-}\abs{\mathcal{D}^0}^{1/2}\right)=\tr\left(\abs{\mathcal{D}^0}^{1/2}P^0_{-}V_{Q\left(t\right)}P^0_{-}P^0_{-}\abs{\mathcal{D}^0}^{1/2}\right)\\
        -\tr\left(\abs{\mathcal{D}^0}^{1/2}P^0_{-}P^0_{-}V_{Q\left(t\right)}P^0_{-}\abs{\mathcal{D}^0}^{1/2}\right)=0,
    \end{multline*}
    \begin{multline*}
        \tr\left(\abs{\mathcal{D}^0}^{1/2}P^0_{+}\left[V_{Q\left(t\right)},P^0_{-}\right]P^0_{+}\abs{\mathcal{D}^0}^{1/2}\right)=\tr\left(\abs{\mathcal{D}^0}^{1/2}P^0_{+}V_{Q\left(t\right)}P^0_{-}P^0_{+}\abs{\mathcal{D}^0}^{1/2}\right)\\
        -\tr\left(\abs{\mathcal{D}^0}^{1/2}P^0_{+}P^0_{-}V_{Q\left(t\right)}P^0_{+}\abs{\mathcal{D}^0}^{1/2}\right)=0.
    \end{multline*}
    Let us now consider the two remaining traces contributing to the time derivative of the BDF energy, that is
    \begin{equation}\label{eq:energy-remaining-terms}
        \tr\left(\mathcal{D}_{Q\left(t\right)}\left[\mathcal{D}_{Q\left(t\right)},Q\left(t\right)\right]\right)+\tr\left(V_{Q\left(t\right)}\left[V_{Q\left(t\right)},P^0_{-}\right]\right).
    \end{equation}
    An easy calculation yields
    \begin{multline*}
        \tr\left(\mathcal{D}_{Q\left(t\right)}\left[\mathcal{D}_{Q\left(t\right)},Q\left(t\right)\right]\right)\\
        =\tr\left(\mathcal{D}^0\left[\mathcal{D}_{Q\left(t\right)},Q\left(t\right)\right]\right)+\tr\left(V_{Q\left(t\right)}\left[\mathcal{D}^0,Q\left(t\right)\right]\right)\\
        +\tr\left(V_{Q\left(t\right)}\left[V_{Q\left(t\right)},P\left(t\right)\right]\right)-\tr\left(V_{Q\left(t\right)}\left[V_{Q\left(t\right)},P^0_{-}\right]\right)
    \end{multline*}
    and the last term on the right-hand side cancels out the second term in \eqref{eq:energy-remaining-terms}. Thus,
    \[
    \begin{split}
        \frac{d}{dt}\E^{\varphi}_\mathrm{BDF}\left(Q\left(t\right)\right)&=\tr\left(\mathcal{D}^0\left[\mathcal{D}_{Q\left(t\right)},Q\left(t\right)\right]\right)+\tr\left(V_{Q\left(t\right)}\left[\mathcal{D}_{Q\left(t\right)},P\left(t\right)\right]\right)-D\left(\dot{\nu}\left(t\right),\rho_{Q\left(t\right)}\right)\\
        &=\tr\left(\mathcal{D}_{Q\left(t\right)}\left[\mathcal{D}_{Q\left(t\right)},P\left(t\right)\right]\right)-\tr\left(\mathcal{D}^0\left[\mathcal{D}_{Q\left(t\right)},P^0_{-}\right]\right)-D\left(\dot{\nu}\left(t\right),\rho_{Q\left(t\right)}\right),
    \end{split}
    \]
    where the first equality holds since $\left[\mathcal{D}^0,P^0_{-}\right]=0$. Now, the second term can be proved to vanish arguing as at the beginning of this proof. Finally, the ciclicality property of the trace implies
    \[
    \frac{d}{dt}\E^\varphi_\mathrm{BDF}\left(Q\left(t\right)\right)=-D\left(\dot{\nu}\left(t\right),\rho_{Q\left(t\right)}\right)\,.
    \]
     Now, \cref{eq:energy-lower-bound} suggests to consider the following (positive) functional
    \[
    \mathcal{G}\left(t\right)=\mathcal{E}^\varphi_\mathrm{BDF}\left(Q\left(t\right)\right)+\frac{1}{2}\norm{\nu\left(t\right)}^2_\CC,
    \]
    whose time derivative is given by
    \[
    \frac{d}{dt}\mathcal{G}\left(t\right)=D\left(\dot{\nu}\left(t\right),\nu\left(t\right)-\rho_{Q\left(t\right)}\right).
    \]
    Therefore, the Cauchy-Schwarz and the elementary inequalities $ab\leq a^2/2 + b^2/2$ (for $a,b\geq 0$) yield
    \begin{align*}
        \frac{d}{dt}\mathcal{G}\left(t\right)&\leq\norm{\dot{\nu}\left(t\right)}_\CC \norm{\nu\left(t\right)-\rho_{Q\left(t\right)}}_\CC\\
        &\leq \frac{1}{2}\norm{\dot{\nu}\left(t\right)}^2_\CC + \frac{1}{2}\norm{\nu\left(t\right)-\rho_{Q\left(t\right)}}^2_\CC.
    \end{align*}
    Our aim is to apply Gr\"onwall's lemma. While the first term on the right-hand side can be easily controlled using \eqref{eq:n(t)}, one would like to find an estimate for the second term making appear the functional $\mathcal{G}$. Using the definition of $h$ in \eqref{eq:h-function}, we infer that
    \[
    \frac{1}{2}\iint\frac{\abs{Q\left(x,y\right)}^2}{\abs{x-y}}dxdy\leq \frac{h\left(v_\mathrm{F}\right)}{2}\tr\left(\abs{\mathcal{D}^0}Q^2\right),
    \]
    where the last term is finite, as $\abs{\mathcal{D}^0}^{1/2}Q\in\schatten{2}$. Moreover, since $-P^0_{-}\leq Q\left(t\right)\leq P^0_{+}$ for any $t\in\intco{0}{T}$ by \cref{stat:P-projection}, we have (see \cite[Sec 4.1]{HaiLewSpa-2012-JMP}) $\tr\left(\abs{\mathcal{D}^0}Q^2\right)\leq\tr\left(\mathcal{D}^0 Q\right)$ and therefore
    \[
    \frac{1}{2}\iint\frac{\abs{Q\left(x,y\right)}^2}{\abs{x-y}}dxdy\leq \frac{h\left(v_\mathrm{F}\right)}{2}\tr\left(\mathcal{D}^0Q\right),
    \]
    In other words, if $\abs{\mathcal{D}^0}^{1/2}Q^{\pm\pm}\abs{\mathcal{D}^0}^{1/2}\in\schatten{1}$ and if the Fermi velocity $v_\mathrm{F}$ satisfies \eqref{eq:vflb}, the exchange term is controlled by the kinetic energy. This implies that when the Fermi velocity satisfies \eqref{eq:vflb} we have the following inequality
    \begin{multline}\label{eq:G-functional-lower-bound}
        \E^\varphi_\mathrm{BDF}\left(Q\left(t\right)\right)+\frac{1}{2}\norm{\nu\left(t\right)}_\CC^2\geq\left(1-\frac{h\left(v_\mathrm{F}\right)}{2}\right)\tr\left(\mathcal{D}^0 Q\left(t\right)\right)\\
        +\frac{1}{2}\norm{\rho_{Q\left(t\right)}-\nu\left(t\right)}^2_\CC
    \end{multline}
    for all $t\in[0,T)$. Observe that \cite{BacBarHelSie-1999-CMP,HaiLewSpa-2012-JMP} \eqref{eq:+Q-} implies that $Q^{++}\left(t\right)-Q^{--}\left(t\right)\geq Q^2\left(t\right)$, and thus
    \begin{equation}\label{eq:HS-norm-control}
        \tr\left(\mathcal{D}^0 Q\left(t\right)\right)=\norm{\abs{\mathcal{D}^0}^{1/2}\left(Q^{++}\left(t\right)-Q^{--}\left(t\right)\right)\abs{\mathcal{D}^0}^{1/2}}_{\schatten{1}}\geq \norm{\abs{\mathcal{D}^0}^{1/2}Q\left(t\right)}_{\schatten{2}}>0,
    \end{equation}
    when $Q\left(t\right)\in\YY$. Now, inequality~\eqref{eq:G-functional-lower-bound} yields
    \[
    \frac{d}{dt}\mathcal{G}\left(t\right)\leq \frac{1}{2}\norm{\dot{\nu}\left(t\right)}^2_\CC +\mathcal{G}\left(t\right),
    \]
    which, by integrating, implies
    \[
    \mathcal{G}\left(t\right)\leq\underbrace{\mathcal{G}\left(0\right)+\frac{1}{2}\int_0^t\norm{\dot{\nu}\left(s\right)}^2_\CC ds}_{\alpha\left(t\right):=}+\int_0^t\mathcal{G}\left(s\right)ds.
    \]
    Since $\alpha$ is clearly non-decreasing, by Grönwall's lemma, one gets
    \[
    \mathcal{G}\left(t\right)\leq\alpha\left(t\right)e^t,\quad\forall t\in\intco{0}{T}.
    \]
    This concludes the proof.
\end{proof}
The following lemma allows to conclued the proof of \cref{stat:main-result}.
\begin{lem}\label{stat:global-existence}
    Let $\Lambda>0$ and $Q_I=\gamma_I-P^0_{-}\in\YY$. Let $\nu\in C^1\left(\intco{0}{+\infty},\CC\right)$ and $Q\in C^1\left(\intco{0}{T_\mathrm{max}},\YY\right)$ be the maximal solution of \eqref{eq:Q-Cauchy-problem}. If $v_\mathrm{F}$ satisfies \eqref{eq:vflb}, then $T_\mathrm{max}=+\infty$.
\end{lem}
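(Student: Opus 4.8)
The plan is to run the standard continuation argument: the maximal solution $Q\in C^1([0,T_\mathrm{max}),\YY)$ produced by \cref{stat:F-lipschitz} either exists globally or its $\YY$-norm blows up as $t\to T_\mathrm{max}$, and \cref{stat:energy-boundedness} will be used to rule out the latter. First I would make the blow-up alternative precise. Because \cref{stat:F-lipschitz} asserts that $F$ is locally Lipschitz on $\YY$ \emph{uniformly on bounded time intervals}, a Picard iteration started at time $t_0$ has a guaranteed existence time $\tau=\tau(R,T)>0$ depending only on a bound $R$ for $\norm{Q(t_0)}$ and on the horizon $T$; iterating this from $t_0$ approaching $T_\mathrm{max}$ shows that, if $T_\mathrm{max}<+\infty$, then necessarily $\limsup_{t\to T_\mathrm{max}^-}\norm{Q(t)}=+\infty$, with $\norm{\cdot}$ the norm \eqref{eq:norm}. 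Hence it suffices to establish an a priori bound $\sup_{[0,T_\mathrm{max})}\norm{Q(t)}<+\infty$ under the assumption $T_\mathrm{max}<+\infty$, and derive a contradiction.

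Next I would assume $T_\mathrm{max}<+\infty$ and invoke \cref{stat:energy-boundedness} on $[0,T_\mathrm{max})$, which gives a finite constant $M$ with $\mathcal G(t)=\E^\varphi_\mathrm{BDF}(Q(t))+\tfrac12\norm{\nu(t)}_\CC^2\le M$ for all $t\in[0,T_\mathrm{max})$. By \cref{stat:P-projection}, $\gamma(t)=Q(t)+P^0_-$ is an orthogonal projection, so \eqref{eq:+Q-} holds along the flow and the lower bound \eqref{eq:G-functional-lower-bound} is available (this is exactly where $v_\mathrm{F}\ge v_\mathrm{c}=h^{-1}(2)$ enters, making the coefficient $1-h(v_\mathrm{F})/2$ nonnegative). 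Then \eqref{eq:G-functional-lower-bound} yields simultaneously $\norm{\rho_{Q(t)}-\nu(t)}_\CC^2\le 2M$ and $\bigl(1-h(v_\mathrm{F})/2\bigr)\tr(\DD^0Q(t))\le M$. The first estimate, together with $\sup_{[0,T_\mathrm{max}]}\norm{\nu(t)}_\CC<\infty$ (a consequence of $\nu\in C^1([0,+\infty),\CC)$ and $T_\mathrm{max}<\infty$), bounds $\norm{\rho_{Q(t)}}_\CC$, i.e.\ the third summand of \eqref{eq:norm}; the second bounds the kinetic energy $\tr(\DD^0Q(t))$.

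It then remains to convert the kinetic bound into a bound on the remaining two summands of \eqref{eq:norm}, which is exactly the content of \eqref{eq:HS-norm-control}: using \eqref{eq:+Q-} one has $Q^{++}(t)-Q^{--}(t)\ge Q^2(t)\ge 0$, so $\norm{\abs{\DD^0}^{1/2}(Q^{++}-Q^{--})\abs{\DD^0}^{1/2}}_{\mathfrak S^1}=\tr(\DD^0Q(t))$ is bounded, and $\norm{\abs{\DD^0}^{1/2}Q(t)}_{\mathfrak S^2}^2\le\tr(\DD^0Q(t))$ is bounded as well. Therefore $\sup_{[0,T_\mathrm{max})}\norm{Q(t)}<+\infty$, contradicting the blow-up alternative; so $T_\mathrm{max}=+\infty$, and combined with \cref{stat:F-lipschitz} and \cref{stat:P-projection} this also completes the proof of \cref{stat:main-result}. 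I expect the only genuinely delicate point to be the first step — deducing the blow-up alternative cleanly from the uniform Lipschitz estimate — together with the bookkeeping observation that the definition of $\YY$ carries the extra requirement $\rho_Q\in\CC$, which does not follow from the Schatten-class conditions but is precisely the quantity controlled by $\mathcal G$ via \eqref{eq:G-functional-lower-bound}; everything else is a direct consequence of \cref{stat:energy-boundedness} and \eqref{eq:HS-norm-control}.
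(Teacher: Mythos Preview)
Your proposal is correct and follows essentially the same approach as the paper's proof: use \cref{stat:energy-boundedness} together with \eqref{eq:G-functional-lower-bound} and \eqref{eq:HS-norm-control} to bound all three summands of the $\YY$-norm, then invoke the blow-up alternative. The only cosmetic difference is that the paper bounds $\norm{\rho_{Q(t)}}_\CC$ via a Cauchy--Schwarz expansion of $\tfrac12\norm{\rho_{Q(t)}-\nu(t)}_\CC^2$ rather than the triangle inequality you use, and it leaves the continuation argument implicit.
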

\begin{proof}
As already remarked, we need to prove that the norm \eqref{eq:norm} stays bounded as $t\to T^-_{\mathrm{max}}$.
    By \cref{stat:energy-boundedness}, we know that $\mathcal{E}^\varphi_\mathrm{BDF}\left(Q\left(t\right)\right)$ is bounded on $[0,T_\mathrm{max})$. Inequalities~\eqref{eq:G-functional-lower-bound} and \eqref{eq:HS-norm-control} allow to bound the terms in $\norm{Q\left(t\right)}$ involving Schatten norms of $Q\left(t\right)$ using the BDF energy. Moreover, using the Cauchy-Schwarz inequality for the Coulomb scalar product, from inequality~\eqref{eq:G-functional-lower-bound} we also deduce that
    \[
    \begin{split}
    \frac{1}{2}\Vert\rho_{Q\left(t\right)}\Vert^2_\CC-\Vert \rho_{Q(t)}\Vert_\CC \sup_{t\in[0,T_{\mathrm{max}}]}\Vert \nu(t)\Vert_{\CC}\leq\mathcal{E}^\varphi_\mathrm{BDF}\left(Q\left(t\right)\right)\,,
    \end{split}
    \]
    which easily implies that the Coulomb norm of $\rho_{Q\left(t\right)}$ is bounded as $t\to T^-_{\mathrm{max}}$ as well. Therefore, the maximal time of existence is given by $T_\mathrm{max}=+\infty$, thereby concluding the proof of \cref{stat:main-result}.
\end{proof}

\section{Conclusion}\label{sec:conclusion}
The main contribution of this work is the proof of the global well-posedness of the dynamics of a mean-field model for graphene, in which we consider external time-dependent electric fields, such as those induced by local charge defects in the monolayer of carbon atoms.\par
This is achieved controlling the norm of the solution by means of the energy, which is not conserved since the external potential is non autonomous, but remains bounded over finite time intervals.\par
We believe that this model can be seen as a first step in the study of scattering theory. This will be the object of future investigations.


\addtocontents{toc}{\protect\setcounter{tocdepth}{0}} 

\section*{Acknowledgements}
U.M. has been supported by the European Union's Horizon 2020 research and innovation programme under the Marie Skłodowska-Curie grant agreement N°945332 and by the Italian Ministry of University and Research (MUR) and Next Generation EU under the PRIN 2022
Project [Prot.2022AKRC5P\_\emph{Interacting quantum systems: topological phenomena and effective theories}]. W. B. has been supported by MUR grant \emph{Dipartimento di Eccellenza 2023-2027} of Dipartimento di Matematica at Politecnico di Milano and by the PRIN 2022 Project [Prot.20225ATSTP\_\emph{Nonlinear dispersive equations in presence of singularities}]  - Financed by the European Union - Next Generation EU, Mission 4 - Component 1 - CUP MASTER E53D2300545006 (CUP D53D23005640001).

\section*{Conflict of interest}
The author declares no conflict of interest.

\section*{Data availability statement}
Data sharing is not applicable to this article as it has no associated data. 

\addtocontents{toc}{\protect\setcounter{tocdepth}{2}} 


\end{document}